\date{April 27, 2010}
\newtheorem{Theorem}{Theorem}
\newtheorem{proposition}{Proposition}
\newtheorem{Lemma}{Lemma}
\newtheorem{corollary}{Corollary}
\numberwithin{equation}{section}
\newcommand{\R}{\mathbb{R}}
\newcommand{\be}{\begin{equation}}
\newcommand{\ee}{\end{equation}}
\newcommand{\bea}{\begin{align}}
\newcommand{\eea}{\end{align}}
\newcommand\bfu{\mathbf{u}}
\newcommand\bfj{\mathbf{j}}
\DeclareMathOperator{\const}{const\,}
\begin{document}

\title{Stability and Absence of Binding for Multi-Polaron Systems}

\author[R.L. Frank]{Rupert L. Frank} \address{R.L. Frank, Department of Mathematics, Princeton University, Washington Road, Princeton, NJ 08544, USA} \email{rlfrank@math.princeton.edu}

\author[E.H. Lieb]{Elliott H. Lieb}
 \address{E.H. Lieb, Departments of Mathematics and Physics, Princeton University, P.O. Box 708, Princeton, NJ 08544, USA} \email{lieb@princeton.edu}

\author[R. Seiringer]{Robert Seiringer}
 \address{R. Seiringer, Department of Physics, Princeton University, P.O. Box 708, 
  Princeton, NJ 08544, USA} \email{rseiring@princeton.edu}

\author[L.E. Thomas]{Lawrence E. Thomas}
 \address{L.E. Thomas, Department of Mathematics, University of Virginia, Charlottesville, VA 22904, USA} \email{let@virginia.edu}

\begin{abstract}
  We resolve several longstanding problems concerning the stability
  and the absence of multi-particle binding for $N\geq 2$
  polarons. Fr\"ohlich's 1937 polaron model describes non-relativistic
  particles interacting with a scalar quantized field with coupling
  $\sqrt\alpha$, and with each other by Coulomb repulsion of strength
  $U$. We prove the following: (i) While there is a known
  thermodynamic instability for $U<2\alpha$, stability of matter
  \emph{does} hold for $U>2\alpha$, that is, the ground state energy
  per particle has a finite limit as $N\to\infty$. (ii) There is no
  binding of any kind if $U$ exceeds a critical value that depends on
  $\alpha$ but not on $N$. The same results are shown to hold for the
  Pekar-Tomasevich model.

\end{abstract}

\maketitle
\renewcommand{\thefootnote}{${}$} \footnotetext{\copyright\, 2010 by
  the authors. This paper may be reproduced, in its entirety, for
  non-commercial purposes.}

\section{Introduction and main results}

Fr\"{o}hlich's large polaron \cite{Fr} is a model for the motion of an
electron in a polar crystal and it is also relevant as a simple model of
non-relativistic quantum field theory.  Consequently there is a huge
literature, both experimental and theoretical, devoted to its
study. See, e.g., \cite{GeLo,Sp,fomin,Mo, AlDe} and references
therein. Our concern here is with the binding or non-binding of
several polarons: whether the ordinary
Coulomb repulsion among the electrons can, if strong enough, prevent
the binding that would otherwise be created by the electric field of
the polar crystal. We are also interested in the \emph{stability of
  matter}, i.e., whether the energy of $N$ polarons is bounded below
by a constant times $N$ even when there is binding.

In this model the single polaron, which is one non-relativistic electron interacting with a phonon field, has the Hamiltonian
\begin{equation}
 \label{eq:ham}
H^{(1)} = p^2 - \sqrt\alpha \phi(x) +H_f \,.
\end{equation}
This Hamiltonian acts in the Hilbert space $L^2(\R^3)\otimes \mathcal F$, where $\mathcal F$
is the bosonic Fock space for the longitudinal optical modes of the
crystal,  with scalar creation and annihilation operators $a^\dagger(k)$
and $a(k)$ satisfying $[a(k),a^\dagger(k')]=\delta(k-k')$. The electron momentum is $p=-i\nabla$, the phonon field energy is
\begin{equation}\label{Hf.eq}
H_f = \int_{\R^3} dk\, a^\dagger(k) a(k) \,,
\end{equation}
and the interaction of the crystal modes with the electron is
\begin{equation}
\phi(x) = \frac{1}{\sqrt2\pi} \int_{\R^3} \frac{dk}{|k|} \left( e^{ikx} a(k)
  + e^{-ikx} a^\dagger(k) \right) \,,
\end{equation}
with coupling constant $\alpha>0$. (In another frequently used
convention $\alpha$ is replaced by $\alpha/\sqrt2$.) The ground state
energy $E^{(1)}(\alpha)$ is the infimum of the spectrum of $H^{(1)}$.
Because of translation invariance,  $E^{(1)}(\alpha)$ cannot be expected to be an
eigenvalue,  and indeed it is not; this was proved in \cite{GeLo1} using methods developed in \cite{Fr2}.

A noteworthy feature of the phonon field energy $H_f$ is its flat
dispersion relation, i.e., there is no non-constant function
$\omega(k)$ in the integrand of (\ref{Hf.eq}).  The energy of infrared
phonons does not go to zero as $|k|\to 0$, while the ultraviolet
energy is finite when $|k|\to\infty$.  If we tried to minimize the energy in a na\"ive way by
completing the square, we would end up with a Coulomb-like $\int dk\,
|k|^{-2}$ self-energy of the polaron. The non-integrability for large
$k$ would lead to a divergent self-energy, but this divergence is
actually mitigated by the electron kinetic energy $p^2$ and the
uncertainty principle. While the single polaron has finite energy,
another problem remains for the many-polaron system; the energy is
finite, but stability of matter will not hold unless a sufficiently
strong Coulomb repulsion among the electrons is included in the
Hamiltonian.

The Hamiltonian for $N$ electrons is
\begin{equation}
H^{(N)}_U = \sum_{i=1}^N \left(p_i^2 - \sqrt\alpha\phi(x_i) \right) +
H_f + U\, V_C(X)
\end{equation}
with $X=(x_1,\dots,x_N)\in \R^{3N}$ and 
\begin{equation}
V_C(X) = \sum_{i<j} \frac{1}{|x_i-x_j|} \,.
\end{equation}
We impose no symmetry restrictions on the electrons, which means that
our lower bounds apply equally to bosons or fermions or particles with
no symmetry restrictions (boltzons).  The Hilbert space is then
$L^2(\R^{3N})\otimes \mathcal F$. Particle spin is irrelevant for our
results and is ignored.  Physically, the parameter $U$ is the square
of the electron charge, and it satisfies $U>2\alpha$
\cite{Fr}. Nevertheless,  we will consider all values $U\geq 0$.

The ground state energy of $H^{(N)}_U$ is denoted by
$E^{(N)}_U(\alpha)$, and the \textit{binding energy} is defined to be
$\Delta E^{(N)}_U(\alpha) = N E^{(1)}(\alpha) - E^{(N)}_U(\alpha)$.
We will prove three theorems about these quantities. 
They were previously summarized in an announcement \cite{FLST}.  Our main goal is to find
conditions on $U$ and $\alpha$ such that no binding occurs, i.e.,
$\Delta E^{(N)}_U(\alpha)=0$. Of particular physical interest is the
case $N=2$ (\textit{bipolaron}). 

\begin{Theorem}[\textbf{Absence of binding for {\mathversion{bold}$N$} polarons}]\label{thm:nobindingN}
 For given $\alpha>0$ there is a finite $U_c(\alpha)>2\alpha$ such that
\begin{equation}
 \label{eq:bindabs}
\Delta E^{(N)}_U(\alpha) =0
\quad\text{for all}\ N\geq 2
\end{equation}
whenever $U\geq U_c(\alpha)$.
\end{Theorem}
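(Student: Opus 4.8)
The plan is to establish the two matching inequalities whose combination gives $\Delta E^{(N)}_U(\alpha)=0$. Since $\Delta E^{(N)}_U(\alpha)=N E^{(1)}(\alpha)-E^{(N)}_U(\alpha)$, the bound $\Delta E^{(N)}_U(\alpha)\ge 0$ (that is, $E^{(N)}_U(\alpha)\le N E^{(1)}(\alpha)$) should be the easy direction: I would take $N$ approximate single-polaron ground states, translate their electron coordinates and the associated (approximately coherent) field clouds to be pairwise a distance $L$ apart, and let $L\to\infty$. Then $U V_C\to 0$, and the cross terms in $H_f-\sqrt\alpha\sum_i\phi(x_i)$ vanish because the one-electron source functions $|k|^{-1}e^{-ikx_i}$ become asymptotically orthogonal as the $x_i$ separate, so the trial energy tends to $N E^{(1)}(\alpha)$. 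The entire content of the theorem therefore lies in the lower bound
\begin{equation*}
E^{(N)}_U(\alpha)\ge N E^{(1)}(\alpha),\qquad U\ge U_c(\alpha),
\end{equation*}
uniformly in $N$.

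For this lower bound the guiding idea is that the field-mediated attraction between electrons is, to leading order, an ordinary Coulomb attraction of strength $2\alpha$. Completing the square in the field operators at a fixed electron configuration $X$ gives
\begin{equation*}
H_f-\sqrt\alpha\sum_i\phi(x_i)=\int\big(a^\dagger(k)-\overline{h_X(k)}\big)\big(a(k)-h_X(k)\big)\,dk-\frac{\alpha}{2\pi^2}\int\frac{dk}{|k|^2}\Big|\sum_i e^{-ikx_i}\Big|^2 ,
\end{equation*}
with $h_X(k)=\tfrac{\sqrt\alpha}{\sqrt2\pi|k|}\sum_i e^{-ikx_i}$, and the last integral splits into a divergent diagonal piece (the single-polaron self-energy) and an off-diagonal piece equal to $2\alpha\sum_{i<j}|x_i-x_j|^{-1}$. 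The off-diagonal piece is exactly the attraction one wants to dominate; the diagonal piece must not be discarded, since it is finite only in combination with the kinetic energies $p_i^2$, which together reconstruct $N E^{(1)}(\alpha)$. Thus the target is an operator inequality of the form
\begin{equation*}
\sum_{i=1}^N\big(p_i^2-\sqrt\alpha\phi(x_i)\big)+H_f\ge N E^{(1)}(\alpha)-2\alpha\sum_{i<j}\frac{1}{|x_i-x_j|}-R ,
\end{equation*}
with a remainder $R$ absorbable into the repulsion. Granting this, $H^{(N)}_U\ge N E^{(1)}(\alpha)+(U-2\alpha)\sum_{i<j}|x_i-x_j|^{-1}-R$, and I would pick $U_c(\alpha)>2\alpha$ so large that $(U-2\alpha)V_C-R\ge 0$ for all $N$.

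The main obstacle is precisely the coexistence of the two pieces above: a naive completion of squares cannot simultaneously retain the sharp cross term $-2\alpha\sum_{i<j}|x_i-x_j|^{-1}$ and keep the diagonal self-energies in the form reconstructing $N E^{(1)}(\alpha)$, because the optimal field $h_X$ couples all electrons and the diagonal divergence is tamed only together with $\sum_i p_i^2$. To overcome this I would distribute the single shared field among the $N$ electrons, e.g. by splitting the coupling in Fourier space into an infrared part — responsible for the genuine $1/r$ attraction and harmless — and an ultraviolet part carrying the self-energy, which must be paired with $\sum_i p_i^2$ one electron at a time; the resulting correlation remainder $R$ is then estimated by the Coulomb energy. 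The quantum correlations not captured by this mean-field cancellation are exactly what forces $U_c(\alpha)$ to exceed the thermodynamic threshold $2\alpha$.

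Controlling $R$ by $V_C$ \emph{uniformly in $N$} is the delicate point, and a more robust alternative is to argue by induction on $N$ through a two-cluster inequality $E^{(N)}_U\ge\min_{1\le M<N}\big(E^{(M)}_U+E^{(N-M)}_U\big)$ up to errors. This would follow from an IMS-type localization splitting the electrons into two spatially separated groups, using that the inter-cluster field attraction decays like $1/L$ in the separation $L$ while the localization error decays like $1/L^2$. The only way this can fail is if a near-minimizer refuses to split, keeping all electrons in a bounded region; but then $V_C$ is large, and choosing $U_c(\alpha)$ large enough makes such a configuration too costly. Either route reduces the theorem to the single-polaron energy $E^{(1)}(\alpha)$ together with a quantitative comparison of the field-mediated attraction against the Coulomb repulsion, and the same scheme should transfer verbatim to the Pekar–Tomasevich model.
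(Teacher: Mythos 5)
Your overall architecture is right --- subadditivity gives $E^{(N)}_U(\alpha)\le NE^{(1)}(\alpha)$, and the content is the reverse inequality, obtained by trading the field-mediated attraction of strength $2\alpha$ against the Coulomb repulsion --- and you correctly identify the obstruction (the self-energy part of the completed square is finite only in combination with $\sum_i p_i^2$, so one cannot simultaneously keep the clean $-2\alpha V_C$ cross term and reconstruct $NE^{(1)}(\alpha)$). But at the point where the proof actually has to be carried out, both of your routes leave a genuine gap. The proposed operator inequality $\sum_i(p_i^2-\sqrt\alpha\phi(x_i))+H_f\ge NE^{(1)}(\alpha)-2\alpha V_C-R$ with $R$ ``absorbable into the repulsion'' is essentially a restatement of the theorem; you give no mechanism for proving $R\le \const V_C$ uniformly in $N$, and the infrared/ultraviolet splitting you sketch is precisely where the difficulty lives. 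Your second route fails at the step you flag yourself: when a near-minimizer ``refuses to split'' and keeps $M$ electrons in a bounded region, the Coulomb repulsion you gain is only \emph{linear} in $M$ (each such electron has a neighbor within the localization scale $\ell$, so $V_C\gtrsim M/\ell$), whereas without further input the energy deficit of such a cluster could a priori be of order $M^2$ or worse (as it actually is for $U<2\alpha$, where $E^{(N)}\sim -N^{7/3}$ for fermions and $-N^3$ for boltzons). To close this case you must know in advance that the deficit is at most linear in $M$ --- that is, you need the stability-of-matter bound of Theorem~\ref{thm:stabalpha}, which your proposal never invokes and which is an essential input to the paper's argument.

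The paper's actual proof combines three ingredients absent from your sketch: (i) an IMS localization \emph{relative to the nearest neighbor}, so that each electron is confined to a ball whose radius is comparable to its distance to the closest companion; (ii) the functional-integral representation, in which the exponent splits into the inter-ball attractions (dominated by the repulsion exactly as you envisage, giving condition \eqref{U1}), the self-interactions of the well-separated electrons (which factorize and contribute exactly $(N-M)E^{(1)}(\alpha)$), and the mutual interactions of the $M$ clustered electrons; and (iii) for that last cluster, the decomposition $U=2\alpha+Z+V$ together with Theorem~\ref{thm:stabalpha} and the guaranteed repulsion $VM/\ell$, yielding condition \eqref{U2}. Ingredient (iii) is the step your proposal is missing, and it is what makes the final bound on $U_c(\alpha)$ independent of $N$.
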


Our proof is constructive and gives an explicit upper bound on
$U_c(\alpha)$; see the discussion at the end of
Section~\ref{sec:nobindn}. This bound on $U_c(\alpha)$ is
linear in $\alpha$ for large $\alpha$, which is the correct
behavior. Presumably, the true $U_c(\alpha)$ behaves linearly even for
small $\alpha$, but this remains an \textit{open problem}.

For the bipolaron, $N=2$, our proof is simpler
and yields the sharper result that the critical $U_c(\alpha)$ indeed
obeys a linear law:

\begin{Theorem}[\textbf{Absence of binding for bipolarons}]\label{thm2}
 Let $N=2$. For some constant  $C<26.6$,
\begin{equation}
 \label{eq:bipol}
\Delta E^{(2)}_U(\alpha) =0
\end{equation}
whenever $U\geq 2C\alpha$.
\end{Theorem}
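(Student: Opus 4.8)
The plan is to prove the statement as an operator lower bound. Since placing the two polarons infinitely far apart yields trial states whose energy tends to $2E^{(1)}(\alpha)$, one always has $E^{(2)}_U(\alpha)\le 2E^{(1)}(\alpha)$, hence $\Delta E^{(2)}_U(\alpha)\ge 0$; the content of the theorem is the matching lower bound. It therefore suffices to show that, as a quadratic form on $L^2(\R^6)\otimes\F$,
\begin{equation*}
H^{(2)}_U \ge 2E^{(1)}(\alpha)
\qquad\text{whenever } U\ge 2C\alpha .
\end{equation*}

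First I would expose the phonon-mediated attraction between the two electrons by completing the square in the field. Writing the coupling as $\sqrt\alpha\,\phi(x_i)=\sqrt\alpha\int g(k)\bigl(e^{ikx_i}a(k)+e^{-ikx_i}a^\dagger(k)\bigr)\,dk$ with $g(k)=(\sqrt2\,\pi|k|)^{-1}$ and shifting the field by $\xi(k)=\sqrt\alpha\,g(k)\bigl(e^{-ikx_1}+e^{-ikx_2}\bigr)$, the field part becomes $\int(a^\dagger-\overline\xi)(a-\xi)\,dk\ge 0$ minus $\int|\xi|^2\,dk$. Using $\int g(k)^2\,e^{ik\cdot r}\,dk=|r|^{-1}$, the cross term of $\int|\xi|^2$ is exactly $2\alpha/|x_1-x_2|$, so formally
\begin{equation*}
H^{(2)}_U = p_1^2+p_2^2+\int(a^\dagger-\overline\xi)(a-\xi)\,dk-2\alpha S_\infty+\frac{U-2\alpha}{|x_1-x_2|},
\end{equation*}
with $S_\infty=\int g(k)^2\,dk$. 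This identity is the heart of the matter: the net two-body interaction is $(U-2\alpha)/|x_1-x_2|$, nonnegative precisely when $U\ge 2\alpha$, which is the origin of the expected linear law $U_c\propto\alpha$ and of the threshold constant.

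The obstacle is the self-energy $S_\infty=\tfrac1{2\pi^2}\int|k|^{-2}\,dk$, which diverges in the ultraviolet, so the completion of the square cannot be used verbatim; the divergence must be absorbed by the electron kinetic energies exactly as it is in the single-polaron problem. To do this rigorously I would split the field at a momentum scale $\Lambda$. For the low modes $|k|<\Lambda$ the square may be completed exactly, producing a finite self-energy and a smoothed version of the attraction $-2\alpha/|x_1-x_2|$ (the cutoff only makes the attraction less singular, which helps). For the high modes $|k|>\Lambda$ I would invoke a Lieb--Yamazaki-type commutator bound: writing $e^{ikx_i}=|k|^{-2}\sum_j k_j[p_{i,j},e^{ikx_i}]$ gains an extra factor $|k|^{-1}$ in the effective form factor, rendering the high-momentum self-energy ultraviolet-convergent at the cost of a controllable fraction of $p_1^2+p_2^2$ and of $H_f$. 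Reassembling, one recovers two single-polaron Hamiltonians, each bounded below by $E^{(1)}(\alpha)$, plus the net repulsion $(U-2C\alpha)\times(\text{nonnegative})$ and an error that the remaining kinetic energy absorbs.

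The main difficulty, and the place where the gap between the naive threshold $2\alpha$ and the rigorous $2C\alpha$ appears, is precisely this ultraviolet step combined with the fact that both electrons couple to the same field: the attractive cross term, the two copies of the self-energy, and the two kinetic energies are entangled, and disentangling them---so that one genuinely reconstructs $2E^{(1)}(\alpha)$ while keeping the effective interaction manifestly repulsive---forces one to surrender a constant factor. Optimizing over the cutoff $\Lambda$ and over the fraction of kinetic energy spent on the commutator bound is what yields the explicit value $C<26.6$.
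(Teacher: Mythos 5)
Your upper bound $E^{(2)}_U(\alpha)\le 2E^{(1)}(\alpha)$ and the identification of the cross term $2\alpha/|x_1-x_2|$ in $\int|\xi|^2\,dk$ are correct, and the completing-the-square/Lieb--Yamazaki machinery you invoke is indeed how one proves \emph{stability}-type bounds (the paper's Section 3 uses a regularized version of exactly this identity). But there is a genuine gap at the step you call ``reassembling, one recovers two single-polaron Hamiltonians, each bounded below by $E^{(1)}(\alpha)$.'' There is only one field: both electrons couple to all modes $a(k)$, and there is only one copy of $H_f$ to distribute. Without spatial separation you cannot give each electron a full, independent field energy, so the ``reassembled'' one-particle pieces are not two copies of $H^{(1)}$. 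What the square-completion plus commutator estimates actually yield per particle is a bound of the form $-c_1\alpha-c_2\alpha^2$ (as in \eqref{eq:ly}), or, if you split the field by Cauchy--Schwarz, $E^{(1)}(2\alpha)$ at doubled coupling. Either way you land strictly below $2E^{(1)}(\alpha)$ by a fixed additive amount of order $\alpha^2$.

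This is fatal for a global operator inequality, not merely a loss in the constant $C$: the only positive term available to absorb that additive deficit is $(U-2\alpha)\langle|x_1-x_2|^{-1}\rangle$, which tends to zero on trial states with the two electrons far apart, while the deficit $2E^{(1)}(\alpha)-(-2\alpha-c\alpha^2)>0$ does not. So the chain of inequalities you propose cannot close uniformly in the state. The paper's resolution is an IMS localization \emph{in the interparticle distance} $|x_1-x_2|$ (plus a further localization of each particle into a ball), which splits the problem into two regimes with different estimates. At short distances ($j=0$) the repulsion is at least $U/\ell$ with $\ell\sim\alpha^{-1}$, which absorbs a crude $O(\alpha^2)$ error (Lemma~\ref{lem2}, proved by Cauchy--Schwarz in the path integral, giving $E^{(2)}_0\ge 2E^{(1)}(\alpha)-\tfrac73\alpha^2$). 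At larger distances ($j\ge1$) one does not reconstruct $E^{(1)}(\alpha)$ from scratch at all: one compares directly with two true single-polaron problems by confining the particles (and, in Lemma~\ref{lem4}, the phonon field, via half-spaces) to disjoint regions, at the price of an error $\sim\alpha/d$ that scales the same way as the available repulsion $\sim U/d$ and is therefore absorbable for $U\ge\const\alpha$. That interparticle-distance localization is the missing idea in your argument; your field-splitting intuition survives in the paper only in the localized form of Lemma~\ref{lem4}, where the square is completed separately on the two half-spaces containing the two balls.
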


The optimal constant $C$ in Theorem~\ref{thm2} is presumably
much closer to $1$ than the bound we derive. It is not equal to $1$,
however. In the strong coupling limit $\alpha\to \infty$, binding
occurs if $U\leq 2.3\, \alpha$ \cite{VSPD,fomin}. For small $\alpha$, on the other hand, variational calculations \cite{VSPD,BrKlDe} suggest that bipolaron binding does not occur at all for any $U\geq 2\alpha$. The proof of this remains an \emph{open problem}.

On the other hand, if binding does occur, we would like to know how
the binding energy depends on $N$; in particular, is there
\emph{stability of matter}, in the sense that $\Delta
E^{(N)}_U(\alpha) \leq C(U,\alpha) N$ for all $N$?

This linear  bound, if it exists, implies the existence of the
\emph{thermodynamic limit}
$$
\lim_{N\to\infty} N^{-1} E^{(N)}_U(\alpha) \,.
$$
The proof is a simple consequence of the sub-additivity of the energy, i.e., 
\begin{equation}\label{eq:subadd}
E^{(N+M)}_U(\alpha) \leq E^{(N)}_U(\alpha)+ E^{(M)}_U(\alpha)\,,
\end{equation}
which follows from the fact that one can construct variational
functions in which $N$ electrons are localized on the earth and $M$
behind the moon \cite{GrMo} \cite[Sec. 14.2]{LiSe}.

The proof of the linear lower bound is far from obvious; indeed, it is
not always true! Griesemer and M\o ller \cite{GrMo} recently proved
that when $U<2\alpha$, there are positive constants (depending on $U$
and $\alpha$) such that $-c_1 N^{7/3}\geq E^{(N)}_U(\alpha)\geq -c_2
N^{7/3}$. This result holds for particles, like electrons, that
satisfy Fermi statistics. If the electrons were bosons, the result
would be even worse, $-N^3$, as a similar analysis shows. This is the
same behavior as that of gravitating particles in stars \cite[Ch.~13]{LiSe}.
In the opposite regime, $U>2\alpha$, \cite{GrMo} shows that
$E^{(N)}_U(\alpha)\geq -C(U,\alpha) N^2$, independently of statistics, where $C(U,\alpha)\to\infty$
as $U\searrow2\alpha$.  (In their convention the dividing line is
$U=\sqrt2 \alpha$.)

Judging from the physics of the model,  it is reasonable to suppose that
there is a linear law as soon as $U>2\alpha$. This we are able to prove.

\begin{Theorem}[\textbf{Stability for {\mathversion{bold} $U>2\alpha$}}] \label{thm:stabalpha}
 For given $U>2\alpha>0$, $N^{-1} E^{(N)}_U(\alpha)$ is bounded independently of $N$.
\end{Theorem}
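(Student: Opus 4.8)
The plan is to prove the lower bound $E^{(N)}_U(\alpha)\ge -C(U,\alpha)\,N$; the matching upper bound $E^{(N)}_U(\alpha)\le N\,E^{(1)}(\alpha)$ follows at once from subadditivity (\ref{eq:subadd}) (or from placing the $N$ polarons far apart), so that $N^{-1}E^{(N)}_U(\alpha)$ is squeezed between two $N$--independent constants. The conceptual engine is that completing the square in the phonon field converts the linear electron--field coupling into an effective two--body attraction of \emph{exactly} Coulomb form, and that this attraction is defeated by $U\,V_C$ precisely at the threshold $U>2\alpha$.

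Concretely, writing $H_f=\int dk\,a^\dagger(k)a(k)$ and using, for every kernel $g(k)$ that may depend on the electron configuration $X$, the elementary bound $a^\dagger(k)a(k)\ge g(k)a^\dagger(k)+\overline{g(k)}\,a(k)-|g(k)|^2$ (a rewriting of $(a^\dagger(k)-\overline{g(k)})(a(k)-g(k))\ge0$), I would choose $g(k)=\frac{\sqrt\alpha}{\sqrt2\,\pi|k|}\sum_{j}e^{-ikx_j}$ so as to cancel the linear coupling $-\sqrt\alpha\sum_i\phi(x_i)$ exactly, obtaining
\be
  H_f-\sqrt\alpha\sum_{i=1}^N\phi(x_i)\ \ge\ -\frac{\alpha}{2\pi^2}\int_{\R^3}\frac{dk}{|k|^2}\Big|\sum_{j=1}^N e^{-ikx_j}\Big|^2 .
\ee
With $\int_{\R^3}|k|^{-2}e^{-ik(x_i-x_j)}\,dk=2\pi^2|x_i-x_j|^{-1}$ and separating $i\ne j$ from $i=j$, the right side equals $-\alpha\sum_{i\ne j}|x_i-x_j|^{-1}$ minus a diagonal self--energy. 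The decisive feature is that the inter--electron coefficient is \emph{exactly} $\alpha$: since $U\,V_C=\frac U2\sum_{i\ne j}|x_i-x_j|^{-1}$, this gives $(\frac U2-\alpha)\sum_{i\ne j}|x_i-x_j|^{-1}\ge0$ as soon as $U\ge2\alpha$, which is why the threshold is sharp and matches the known instability for $U<2\alpha$.

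The main obstacle is the diagonal self--energy $-\frac{\alpha N}{2\pi^2}\int_{\R^3}|k|^{-2}\,dk$, which is ultraviolet divergent; completing the square alone is therefore too crude, and one must not surrender the full coupling to it. I would split the $k$--integral at a scale $\Lambda$, complete the square only for $|k|<\Lambda$ (producing a finite self--energy of order $\alpha\Lambda N$ together with the long--range part of the attraction), and bound the high--momentum coupling $|k|>\Lambda$ by the electron kinetic energy via a Lieb--Yamazaki commutator identity, writing $e^{ikx}=|k|^{-2}\sum_\ell k_\ell[p_\ell,e^{ikx}]$ to gain the extra decay that renders the $k$--integrals convergent, at the price of a term $\ep\sum_i p_i^2$. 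The delicate point, where essentially all the work lies, is the bookkeeping that keeps the total inter--electron coefficient pinned at exactly $\alpha$ (the low-- and high--momentum pieces must recombine to the full Coulomb potential without the Cauchy--Schwarz steps inflating the attraction), while regularizing only the diagonal self--energy down to a per--particle bound $\ep\,p_i^2+C(\ep,\alpha)$.

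Assembling the estimates I expect a bound of the schematic form
\be
  H^{(N)}_U\ \ge\ (1-\ep)\sum_{i=1}^N p_i^2+\Big(\tfrac U2-\alpha\Big)\sum_{i\ne j}\frac{1}{|x_i-x_j|}-\alpha\sum_{i\ne j}\tilde W(x_i-x_j)-C(\ep,\alpha)\,N ,
\ee
where $\tilde W$ is a bounded, short--range kernel left over from the high--momentum estimate. For $U>2\alpha$ the surviving Coulomb repulsion, together with the kinetic energy, dominates the residual attraction $-\alpha\sum_{i\ne j}\tilde W$ by standard stability--of--matter arguments for one--sign Coulomb systems, leaving $E^{(N)}_U(\alpha)\ge -C(U,\alpha)N$. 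The final constant deteriorates as $U\searrow2\alpha$, precisely because the controlling surplus $\frac U2-\alpha$ vanishes there, which is consistent with the $-C N^2$ behavior found in \cite{GrMo} at the borderline. Combined with the upper bound, this shows $N^{-1}E^{(N)}_U(\alpha)$ is bounded independently of $N$.
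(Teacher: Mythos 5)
Your overall strategy --- complete the square in the phonon field to expose the effective $-2\alpha V_C$ attraction, note that it is exactly cancelled by $U V_C$ at $U=2\alpha$, and then fight the ultraviolet divergence of the self-energy with a momentum cutoff plus the Lieb--Yamazaki commutator trick --- is the natural first attempt, and your algebra for the low-momentum completion of the square is correct. But there is a genuine gap precisely at the point you yourself flag as ``where essentially all the work lies.'' The high-momentum coupling $-\sqrt\alpha\sum_i\phi_{>\Lambda}(x_i)$ involves $N$ particles all coupled to the \emph{same} field. After the commutator identity, the cross terms $p_i\cdot(\text{field operator})$ must be controlled by Cauchy--Schwarz against $\ep\, p_i^2$ \emph{and} a fraction of the remaining field energy $\int_{|k|>\Lambda}a^\dagger(k) a(k)\,dk$; since that field energy can be borrowed only once, each particle gets only a share of order $1/N$ of it, and the resulting error is of order $\alpha N^2/(\ep\Lambda)$ rather than the ``per-particle bound $\ep\,p_i^2+C(\ep,\alpha)$'' you assert. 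Raising $\Lambda$ to compensate inflates the low-momentum diagonal self-energy, which grows like $\alpha N\Lambda$, so no choice of $\Lambda$ yields a bound linear in $N$. This is exactly why Griesemer and M\o ller \cite{GrMo}, following essentially this route, obtained only $E^{(N)}_U(\alpha)\geq -C(U,\alpha)N^2$ for $U>2\alpha$; as written, your outline does not improve on their result, and the statement you need ($-CN$) is the one left unproved.

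The missing idea, which is the heart of the paper's proof, is a localization that makes the competition for the field energy \emph{local}. Using the Conlon--Lieb--Yau sliding technique, the combination $2\alpha V_C-\sqrt\alpha\sum_i\phi(x_i)+H_f$ and the kinetic energy are decomposed into an integral over translates $\chi_z$ of a compactly supported cutoff, so that in each cell only the $n_z$ particles present there draw on the local field modes. The square is then completed cell by cell with a cutoff $\Lambda=\max\{\omega,4\alpha n_z\}$ adapted to the local particle number; the resulting local self-energy of order $\alpha^2 n_z^2$ is not left as a constant but is converted, upon integrating over $z$, into a term $8\alpha^2 Z\,V_C$ proportional to the Coulomb repulsion itself, with $Z$ bounded by the diameter of $\mathrm{supp}\,\chi$ and hence arbitrarily small. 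This is what allows every residual attraction to be absorbed into $(U-2\alpha)V_C$ for each fixed $U>2\alpha$, at the price only of $O(N)$ constant terms. Without some version of this localization (or another device preventing all $N$ particles from drawing on the same field energy), your scheme cannot be completed. The upper bound via subadditivity is, of course, correct.
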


Our lower bound on $N^{-1} E^{(N)}_U(\alpha)$ goes to $-\infty$ as
$U\searrow2\alpha$, but we are not claiming that this reflects the
true state of affairs. Whether $\lim_{N\to\infty} N^{-1}
E^{(N)}_{2\alpha}(\alpha)$ is finite or not remains an \textit{open
  problem}; see, however, the discussion of the strong coupling limit
below.

For $U$ in the range $2\alpha<U<U_c(\alpha)$, there are bound states
of an undetermined nature. Does the system become a gas of bipolarons or does it coalesce into a true $N$-particle bound state? If the latter, does this state exhibit a periodic structure, thereby forming a super-crystal on top of the underlying lattice of atoms? This is perhaps the physically most interesting \textit{open problem}.


\subsection{The strong coupling limit}

There is a non-linear differential-integral variational principle associated with the polaron problem,  which gives the exact ground state energy in the limit $\alpha\to \infty$. This variational problem was investigated in detail by Pekar \cite{Pe}. 
Pekar and Tomasevich (PT) \cite{PeTo} generalized it to the bipolaron,  and the extension to $N$-polarons obviously follows from \cite{PeTo}.

The PT functional is the result of a variational calculation and therefore gives an upper bound to the ground state energy $E^{(N)}_U(\alpha)$. In order to compute $\langle\Psi,H^{(N)}_U\Psi\rangle$, one takes a $\Psi$ of the form $\psi\otimes \Phi$ where $\psi\in L^2(\R^{3N})$, $\Phi\in\mathcal F$,  and both $\psi$ and $\Phi$ are normalized. For a given $\psi$ it is easy to compute the optimum $\Phi$, and one ends up with the functional
\begin{align}\nonumber
\mathcal P^{(N)}_{U}[\psi] & := \sum_{i=1}^N \int_{\R^{3N}} |\nabla_i \psi|^2 \,dX + U \sum_{i<j} \int_{\R^{3N}} \frac{|\psi(X)|^2}{|x_i-x_j|} \,dX \\ & \quad - \alpha \iint_{\R^3\times\R^3} \frac{\rho_\psi(x)\, \rho_\psi(y)}{|x-y|} \,dx\,dy \,, \label{eq:pekar}
\end{align}
where $dX=\prod_{k=1}^N dx_k$, and
\begin{equation}
\rho_\psi(x) = \sum_{i=1}^N \int_{\R^{3(N-1)}} |\psi(x_1,\ldots,x,\ldots,x_N)|^2 \,dx_1\cdots \widehat{dx_i} \cdots dx_N
\end{equation}
with $x$ at the $i$-th position, and $\widehat{dx_i}$ meaning that $dx_i$ has to be omitted in the product $\prod_{k=1}^N dx_k$. The ground state energy is
\begin{equation}
\mathcal E^{(N)}_U(\alpha) = \inf\left\{ \mathcal P^{(N)}_{U}[\psi] : \int_{\R^{3N}} |\psi|^2\,dX = 1 \right\} \,.
\end{equation}
Hence the variational argument above gives the upper bound
\begin{equation}
 E^{(N)}_U(\alpha) \leq \mathcal E^{(N)}_U(\alpha) = \mathcal E^{(N)}_{U/\alpha}(1) \ \alpha^2 \,.
\end{equation}
(The equality follows by scaling.)
For $N=1$ this upper bound is due to Pekar; numerically, one has $\mathcal E^{(1)}(\alpha) \approx -(0.109) \alpha^2$ \cite{Mi}. Moreover, the minimization problem for $\mathcal E^{(1)}(\alpha)$ has a unique minimizer (up to translations), see \cite{Li}.

The upper bound for $N=1$ was widely understood to be asymptotically exact for large $\alpha$. A proof of this was finally achieved by Donsker and Varadhan \cite{DoVa}, using large deviation theory applied to the functional integral discussed below. Later, this fact was rederived in  \cite{LiTh}  by operator methods,  and it was shown that the error was no worse than $\alpha^{9/5}$ for large $\alpha$.

The fact that for fixed ratio $\nu=U/\alpha\geq 0$ 
\begin{equation}
 \label{eq:pekarlimit}
\lim_{\alpha\to\infty} \alpha^{-2} E^{(N)}_{U}(\alpha) = \mathcal E^{(N)}_{\nu}(\alpha=1)
\end{equation}
for $N=2$ and any $\nu\geq 0$ was first noted in \cite{spohn}. This is also valid for arbitrary $N$.

It follows from the limiting relation \eqref{eq:pekarlimit}, together with the fact that our bound on $U_c(\alpha)$ is linear for large $\alpha$, that our three theorems about $E^{(N)}_U(\alpha)$ transfer to the same theorems about $\mathcal E^{(N)}_U(\alpha)$ which we state next.

\begin{corollary}[{\bf Stability and absence of binding for the PT functional}]\label{pekarcor}$\quad$
1. Stability holds for $U>2\alpha$, that is, for any $\nu>2$ there is a constant $C(\nu)$ such that
\begin{equation}
\label{eq:stabpekar}
 \mathcal E^{(N)}_U(\alpha) \geq -C(\nu) \alpha^2 N 
\quad\text{for all}\ N\geq 2
\end{equation}
whenever $U=\nu\alpha>0$. \\
2. There is no binding if $U/\alpha$ is large enough, that is, there is a finite $\nu_c>2$ such that
\begin{equation}
 \label{eq:bindabspekar}
\mathcal E^{(N)}_U(\alpha) = N \mathcal E^{(1)}(\alpha)
\quad\text{for all}\ N\geq 2
\end{equation}
whenever $U\geq \nu_c \alpha$.
\end{corollary}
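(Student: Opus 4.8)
The plan is to transfer both statements from the quantum theorems via the Pekar scaling and the strong-coupling limit \eqref{eq:pekarlimit}, rather than re-running the variational analysis on the functional $\mathcal P^{(N)}_U$ directly. The basic observation is that, by the scaling identity $\mathcal E^{(N)}_U(\alpha) = \alpha^2\,\mathcal E^{(N)}_{U/\alpha}(1)$, the quantity $\mathcal E^{(N)}_{\nu}(1)$ depends only on the ratio $\nu=U/\alpha$ and on $N$, so it suffices to control $\mathcal E^{(N)}_\nu(1)$ for each fixed $\nu$ and restore the $\alpha$-dependence afterwards. The essential subtlety is that $\mathcal E^{(N)}_U(\alpha)$ is an \emph{upper} bound for $E^{(N)}_U(\alpha)$, so a lower bound on the latter does not by itself bound the former; the only bridge in the correct direction is the limit \eqref{eq:pekarlimit}, which identifies $\mathcal E^{(N)}_\nu(1) = \lim_{\alpha\to\infty}\alpha^{-2} E^{(N)}_{\nu\alpha}(\alpha)$ for each fixed $N$.

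For Part 1, I would fix $\nu>2$ and apply Theorem~\ref{thm:stabalpha} along the sequence $U=\nu\alpha$, $\alpha\to\infty$. The point is to extract from the proof of that theorem a lower bound of the form $E^{(N)}_{\nu\alpha}(\alpha)\geq -C(\nu)\,\alpha^2 N$, in which the constant, once divided by $\alpha^2$, depends only on $\nu$ and not on $\alpha$ or $N$. Dividing by $\alpha^2$ and letting $\alpha\to\infty$ through \eqref{eq:pekarlimit} then gives $\mathcal E^{(N)}_\nu(1)\geq -C(\nu)N$ for every fixed $N$, and rescaling yields \eqref{eq:stabpekar}.

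For Part 2, I would first record that $\mathcal E^{(N)}_U(\alpha)$ is subadditive — the earth/moon construction behind \eqref{eq:subadd} applies verbatim to trial functions for $\mathcal P^{(N)}_U$ — so $\mathcal E^{(N)}_U(\alpha)\leq N\mathcal E^{(1)}(\alpha)$ and the PT binding energy is nonnegative. Next, since the bound on $U_c(\alpha)$ furnished by Theorem~\ref{thm:nobindingN} is linear in $\alpha$ for large $\alpha$, there are $\nu_c>2$ and $\alpha_0$ with $U_c(\alpha)\leq\nu_c\alpha$ for all $\alpha\geq\alpha_0$. Fixing $\nu\geq\nu_c$, the inequality $\nu\alpha\geq U_c(\alpha)$ holds for all $\alpha\geq\alpha_0$, so \eqref{eq:bindabs} gives $\Delta E^{(N)}_{\nu\alpha}(\alpha)=0$, i.e. $E^{(N)}_{\nu\alpha}(\alpha)=N E^{(1)}(\alpha)$, for every such $\alpha$. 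Dividing by $\alpha^2$, applying \eqref{eq:pekarlimit} to both sides, and using that $\mathcal E^{(1)}(\alpha)$ carries no $U$-dependence, I obtain $\mathcal E^{(N)}_\nu(1)=N\mathcal E^{(1)}(1)$, which is \eqref{eq:bindabspekar} after rescaling.

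The main obstacle is the $\alpha$-dependence of the stability constant in Part 1: Theorem~\ref{thm:stabalpha} is stated only for fixed $U,\alpha$, so I must verify that its proof actually produces a constant of order $\alpha^2$, uniformly as $\alpha\to\infty$ along $U=\nu\alpha$. Equivalently, one has to check that no factor in the stability estimate degenerates faster than $\alpha^2$ in the strong-coupling regime, which is where the transfer could fail. Part 2 is comparatively routine, the only care needed being the extraction of the explicit linear-in-$\alpha$ bound on $U_c(\alpha)$ from the constructive argument of Section~\ref{sec:nobindn} and the verification that its slope can be taken strictly above $2$.
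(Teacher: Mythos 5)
Your proposal is correct and is essentially the paper's own justification of the corollary: the authors derive it by combining the limiting relation \eqref{eq:pekarlimit} with the facts, verified at the ends of Sections 3 and 4, that the stability constant for $U=\nu\alpha$ is of order $\alpha^2 N$ and that the bound on $U_c(\alpha)$ is linear in $\alpha$ for large $\alpha$ — exactly the two points you flag as needing verification. (The paper additionally gives a second, direct proof in Section~\ref{sec:pekar} via the Hoffmann-Ostenhof and Lieb--Oxford inequalities and a localization of the linearized functional, which avoids the functional integral and yields the explicit constants \eqref{eq:stabpekarconst}, but that is presented as an alternative to, not a replacement for, the transfer argument you use.)
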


We note in passing that the critical $U$ in the PT model depends linearly on $\alpha$. This follows by scaling.

In Section \ref{sec:pekar} we will give a direct and easier proof of Corollary \ref{pekarcor} that has no need of the functional integral machinery and leads to better constants. In particular, with our new proof we find that
\begin{equation}
 \label{eq:stabpekarconst}
C(\nu) \leq
\begin{cases}
 (0.0280) \nu^3/(\nu-2) & \quad\text{if}\ 2<\nu<3 \,,\\
0.755 & \quad\text{if}\ \nu\geq 3 \,.
\end{cases}
\end{equation}
For fermions, a stronger result than Corollary \ref{pekarcor} was obtained in \cite{GrMo}. They prove that at critical coupling $U=2\alpha$ one has
\begin{equation}\label{eq:thm4}
\mathcal E^{(N)}_{2\alpha}(\alpha,q) \geq - (0.461)^2 \alpha^2 q^{2/3} N
\quad\text{for all}\ N\geq 2
\ \text{and all}\ 1\leq q\leq N \,,
\end{equation}
where $\mathcal E^{(N)}_{U}(\alpha,q)$ is the infimum of $\mathcal P^{(N)}_{U}[\psi]$ restricted to fermions with $q$ spin states ($q=2$ for electrons). Recall that the single polaron energy is $\mathcal E^{(1)}(\alpha) = -(0.109)\alpha^2$ \cite{Mi}. For completeness we repeat the short proof of (\ref{eq:thm4}) in Section~\ref{sec:pekar}.


\subsection{Previous results on the polaron ground state energy}

In addition to the large $\alpha$ asymptotics just mentioned, there have been several other rigorous results on the ground state energy of the (single) polaron, some of which we will use here.

\begin{itemize}
 \item[(i)]
One of the earliest results was the variational calculation of $E^{(1)}(\alpha)$ for small $\alpha$ by Gurari and by Lee, Low, and Pines \cite{Gu,LePi,LeLoPi} which leads to
\begin{equation}
 \label{eq:llp}
E^{(1)}(\alpha) \leq -\alpha 
\qquad \text{for all}\ \alpha \,.
\end{equation}
\item[(ii)]
A lower bound, which validates the conclusion that $E^{(1)}(\alpha) \sim -\alpha$ as $\alpha\to 0$, was obtained in \cite{LiYa}. They prove that
\begin{equation}
 \label{eq:ly}
E^{(1)}(\alpha) \geq -\alpha -\frac13 \alpha^2 
\qquad \text{for all}\ \alpha \,.
\end{equation}
(To derive \eqref{eq:ly} use $p=1+2\alpha/3$ in \cite[Eq. (24)]{LiYa}.) We note the fact that \eqref{eq:ly} has the correct power law behavior for both small and large $\alpha$.
\item[(iii)]
\textit{The functional integral formulation}: The large time behavior of the heat kernel $\exp(-TH^{(N)}_U)$ gives us the ground state energy as
\begin{equation}
E^{(N)}_U (\alpha)= -\lim_{T\to\infty} T^{-1} \ln \left\langle\exp\big(-TH^{(N)}_U\big)\right\rangle \,,
\end{equation}
where $\langle \,\cdot\, \rangle$ denotes the expectation in a suitable state, i.e., a normalized vector in the Hilbert space. Since the phonon operators can be realized as the coordinates of a quantum-mechanical harmonic oscillator (one for each value of $k$), we can apply the Feynman-Kac formula for the evaluation of $\langle\exp(-TH^{(N)}_U)\rangle$. Since the harmonic oscillator coordinates appear only linearly and quadratically in the exponent, they can then be integrated out explicitly.  One obtains a formula due to Feynman \cite{Feynman55}, see also \cite[Sec. 5.3]{Ro} for a careful discussion. The conclusion is that
\begin{equation}
 \label{eq:pathint}
E^{(N)}_U(\alpha) = - \lim_{R\to\infty} \lim_{T\to\infty} T^{-1} \ln Z^{(N)}_{U,R}(T) \,,
\end{equation}
where
\begin{align}
\label{eq:pathint2}
 Z^{(N)}_{U,R}(T) := \int_{B_R} dx_1 \cdots \int_{B_R} dx_N & \int dW^T_{x_1}(\omega_1) \cdots dW^T_{x_N}(\omega_N) 
\chi_{B_R}(\omega_1) \cdots \chi_{B_R}(\omega_N) \notag \\
\times & \exp\left( \alpha \int_\R \frac{ds\, e^{-|s|}}{2} 
\sum_{i,j=1}^N  \int_0^T  \frac {dt}{|\omega_i(t)-\omega_j(t+s)|} \right)  \notag \\
\times & \exp\left( - U \sum_{i<j}  \int_0^T  \frac {dt}{|\omega_i(t)-\omega_j(t)|} \right) \,.
\end{align}
Here $dW^T_x$ denotes the Wiener measure of closed Brownian paths in $\R^3$ with period $T$ starting and ending at $x$. Moreover, $B_R$ denotes the ball centered at the origin of radius $R$, and the characteristic function of the path $\chi_{B_R}(\omega_j)$ is 1 if $\omega_j$ stays inside the ball $B_R$ for all times, and zero otherwise. The argument of $\omega_j(t+s)$ is understood modulo $T$.
\end{itemize}


\section{The two-polaron problem: Absence of Binding}

We first consider the special case $N=2$ and prove Theorem \ref{thm2}. It is convenient to structure the proof in three steps. 

\emph{Step 1. Partition of the interparticle distance.} We  choose a quadratic partition of unity (IMS localization \cite[Thm~3.2]{IMS}) and localize the particles according to their \emph{relative distance}. (In the $N$-particle case later on, we will localize with respect to the nearest neighbor distance, which, for $N=2$, is the same as the relative distance.) This kind of localization is one of the principal novel features of our analysis.

In order to construct this partition, we pick some parameters $b>1$ and $\ell > 0$, and let
\begin{equation}\label{phi1}
\varphi(t) := \left\{ 
\begin{array}{ll} 
0 & \text{for $t\leq \ell/b$} \,, \\
\sin\frac \pi 2 \frac{t-\ell/b}{\ell-\ell/b} & \text{for $\ell/b\leq t \leq \ell $} \,,\\
\cos\frac \pi 2\frac{t-\ell}{b\ell-\ell} & \text{for $\ell\leq t\leq b\ell$} \,,\\
0 & \text{for $t\geq b\ell$} \,. 
\end{array} \right.
\end{equation}
For $j\geq 1$, let $\varphi_j(t) := \varphi(b^{1-j}t)$, and for $j=0$, let 
\begin{equation}\label{phi2}
\varphi_0(t) := \left\{ \begin{array}{ll} 1 & \text{for $t\leq \ell/b$} \,, \\ \cos\frac \pi 2 \frac{t-\ell/b}{\ell-\ell/b} & \text{for $\ell/b\leq t\leq \ell$} \,, \\ 0 & \text{for $t\geq \ell$}\,. \end{array}  \right.
\end{equation}
Then
\begin{equation}
\sum_{j\geq 0} \varphi_j(t)^2 = 1 \quad \text{for all $t\geq 0$} \,.
\end{equation}

Using the IMS localization formula, we can write, for any wave function $\psi$,
\begin{equation}\label{2.4}
\langle \psi | H^{(2)}_U |\psi \rangle  
= \sum_{j\geq 0} \left\langle \psi_j \left| H^{(2)}_U - 2 \sum_{k\geq 0} \left| \varphi_k'(|x_1-x_2|)\right|^2 \right|  \psi_j \right\rangle
 =: \sum_{j\geq 0} e_j \|\psi_j\|^2
\end{equation}
with $\psi_j(x_1,x_2) = \psi(x_1,x_2) \varphi_j(|x_1-x_2|)$ and with numbers $e_j$ (depending on $\psi_j$). Our goal is to prove that $e_j\geq 2 E^{(1)}(\alpha)$ for all $j$ if $U\geq 2C\alpha$. If this is indeed the
case, then the right side of (\ref{2.4}) exceeds $2E^{(1)}(\alpha)\sum_j\|\psi_j\|^2
= 2E^{(1)}(\alpha)\|\psi\|^2$, which is the assertion of the theorem. 

For our bounds we shall use the fact that on the support of $\varphi_j(|x_1-x_2|)$, the localization error is dominated by
\begin{equation}\label{eq:locerror}
 \sum_{k\geq 0} \left| \varphi_k'(|x_1-x_2|)\right|^2 \leq \frac{\pi^2}{4(\ell-\ell/b)^2} \times \left\{ \begin{array}{ll} 1 & \text{if $j=0$} \,, \\ b^{2(1-j)} & \text{if $j\geq 1$} \,. \end{array}\right.
\end{equation}
Moreover, on these supports, we shall bound the Coulomb repulsion from below by
\begin{equation}\label{boundu}
\frac U{|x_1-x_2|} \geq  b^{-j} \frac U\ell \qquad\text{for all}\ j\geq 0 \,.
\end{equation}
It is clear from \eqref{eq:locerror} and \eqref{boundu} that by choosing $U$ large enough, we can dominate the negative localization error by a part of the positive Coulomb term. What remains is to dominate the polaronic attraction by the remainder of the Coulomb repulsion. For this, we distinguish between the cases $j\geq 1$ and $j=0$.

\bigskip

\emph{Step 2. The case $j\geq 1$; Energy estimate for separated particles.}
We further localize each of the two particles to its own ball of radius $b^j L$ for some parameter $L>0$. This will entail an additional localization error. Concretely, let 
\begin{equation}\label{defchi}
\chi(x) = \frac 1{\sqrt{2\pi} |x|}
\left\{ 
\begin{array}{ll} 
\sin(\pi|x|) & \text{for $|x|\leq 1$} \,, \\
0 & \text{for $|x|\geq 1$} \,,
\end{array}\right.
\end{equation}
and note that $\int dx \,\chi(x)^2 = 1$ and $\int dx |\nabla \chi(x)|^2 = \pi^2$. With
\begin{equation}
\psi_{j,u_1,u_2}(x_1,x_2) = \psi_j(x_1,x_2) (b^j L)^{-3} \chi(b^{-j}(x_1-u_1)/L) \chi(b^{-j}(x_2-u_2)/L)
\end{equation}
we have, by a continuous version of the IMS localization formula, 
\begin{align}\nonumber\
&  \left\langle \psi_j \left| H^{(2)}_U - 2 \sum_{k\geq 0} \left| \varphi_k'(|x_1-x_2|)\right|^2 \right|  \psi_j \right\rangle \\ \nonumber & = \int_{\R^3} du_1 \int_{\R^3} du_2 \left\langle \psi_{j,u_1,u_2} \left| H^{(2)}_U  - 2\sum_{k\geq 0} \left| \varphi_k'(|x_1-x_2|)\right|^2 - \frac {2 \|\nabla\chi\|^2}{b^{2j} L^2} \right|  \psi_{j,u_1,u_2}\right\rangle \\ & \geq \int_{\R^3} du_1 \int_{\R^3} du_2 \left\langle \psi_{j,u_1,u_2} \left| H^{(2)}_U  - b^{-2j} \left( \frac {b^2 \pi^2}{2(\ell-\ell/b)^2} + \frac {2\pi^2}{L^2}\right) \right|  \psi_{j,u_1,u_2}\right\rangle \,. \label{48}
\end{align}
The latter inequality comes from \eqref{eq:locerror}.
 Note that since $|x_1-x_2|\geq b^{j-2} \ell$ on the support of $\varphi_j$,  the wave function $\psi_{j,u_1,u_2}$ is non-zero only if the two balls of radius $b^j L$ centered at $u_1$ and $u_2$, respectively,  are separated at least a distance
\begin{equation}\label{bounddd}
d \geq b^{j-2}\ell-4 b^jL \,.
\end{equation}
We choose this to be positive by requiring that $L< \ell/(4b^2)$.

\begin{Lemma}\label{lem3}
Assume that $\psi$ is normalized and supported in $B_1\times B_2$ where $B_1$ and $B_2$ are disjoint balls of some radius $R$, separated a distance $d$. Then
\begin{equation}\label{ls}
\langle \psi| H^{(2)}_{0}|\psi\rangle \geq 2 E^{(1)}(\alpha) - \frac {2\alpha} d 
\end{equation}
\end{Lemma}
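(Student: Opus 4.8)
The natural route is through the functional-integral representation (iii), which turns the shared-field interaction into an explicit, manifestly attractive two-body term on the Brownian paths. The plan is to prove the stronger statement that the bottom of the spectrum of $H^{(2)}_0$, restricted to the subspace of states whose electronic coordinates are supported in $B_1\times B_2$, is at least $2E^{(1)}(\alpha)-2\alpha/d$; the assertion \eqref{ls} then follows from the variational principle, since $\langle\psi|H^{(2)}_0|\psi\rangle\geq\infspec(H^{(2)}_0\!\restriction_{B_1\times B_2})$ for every normalized $\psi$ with the stated support.

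First I would write down the Feynman--Kac representation of this confined ground state energy, exactly as in \eqref{eq:pathint}--\eqref{eq:pathint2} but with $U=0$ and with particle $1$ constrained to $B_1$ and particle $2$ to $B_2$ (rather than both to a common ball). This is the standard path-integral expression for $-\lim_{T\to\infty}T^{-1}\ln\Tr\exp(-TH^{(2)}_0)$ with Dirichlet conditions on the two balls, and the point is that for $U=0$ the exponent contains only the attractive double sum $\tfrac{\alpha}{2}\int_\R ds\,e^{-|s|}\sum_{i,j=1}^2\int_0^T dt\,|\omega_i(t)-\omega_j(t+s)|^{-1}$.

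Next I would split this double sum into its diagonal part ($i=j$) and its off-diagonal part ($i\neq j$). The diagonal part is the sum of two single-path self-energies, one depending only on $\omega_1$ and one only on $\omega_2$, so after dropping the positive off-diagonal part the integral factorizes into a product of two single-polaron partition functions, one over paths in $B_1$ and one over paths in $B_2$. For the off-diagonal (cross) term I would use that $\omega_1$ stays in $B_1$ and $\omega_2$ in $B_2$, balls separated by distance $d$, so that $|\omega_i(t)-\omega_j(t+s)|\geq d$ pointwise; the cross term is then bounded above by $\tfrac{\alpha}{2}\,(2T/d)\int_\R e^{-|s|}\,ds=2\alpha T/d$, uniformly in the paths. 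This yields $Z^{(2)}_{0}(T)\leq e^{2\alpha T/d}\,Z^{(1)}_{B_1}(T)\,Z^{(1)}_{B_2}(T)$.

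Finally I would take $-T^{-1}\ln(\cdot)$ and let $T\to\infty$, converting the product bound into the additive estimate $\infspec(H^{(2)}_0\!\restriction_{B_1\times B_2})\geq -2\alpha/d+\infspec(H^{(1)}\!\restriction_{B_1})+\infspec(H^{(1)}\!\restriction_{B_2})$. Since confining a single polaron to a ball only restricts the positive path integrand and hence decreases $Z^{(1)}$, one has $\infspec(H^{(1)}\!\restriction_{B_i})\geq E^{(1)}(\alpha)$, and combining the two displays gives \eqref{ls}. The main obstacle is not the analysis but the bookkeeping of rigor in the confined functional-integral framework: the validity of Feynman--Kac with the ball (Dirichlet) constraints, the factorization of the Wiener measure over the two independent paths, and the existence of the $T\to\infty$ limits. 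Once that framework is granted, the cross-term estimate and the confinement monotonicity are both elementary.
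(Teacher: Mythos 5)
Your proposal is correct and follows essentially the same route as the paper: the Feynman--Kac representation of the confined energy, the pointwise bound $|\omega_1(t)-\omega_2(t+s)|\geq d$ on the cross terms giving the factor $e^{2\alpha T/d}$, factorization of the remaining diagonal part into two single-polaron partition functions, and monotonicity under removal of the ball constraint to recover $E^{(1)}(\alpha)$. The only cosmetic slip is the phrase ``dropping the positive off-diagonal part,'' which would go the wrong way for an upper bound on $Z$; your subsequent displayed inequality $Z^{(2)}_{0}(T)\leq e^{2\alpha T/d}\,Z^{(1)}_{B_1}(T)\,Z^{(1)}_{B_2}(T)$ makes clear you mean replacing the cross term by its supremum, which is exactly what the paper does.
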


This lemma will be proved in Subsection \ref{sec:pathintproofs}. It is an easy consequence of the functional integral representation of the ground state energy.

We apply inequality \eqref{ls} to (\ref{48}). Using the bounds (\ref{bounddd}) on $d$ and (\ref{boundu}) on the Coulomb potential, we conclude that $e_j$ is bounded from below as
\begin{equation}
e_j \geq 2 E^{(1)}(\alpha) - b^{-j} \frac{2\alpha}{ \ell/b^2-4 L} + b^{-j} \frac U\ell - b^{-2j} \left( \frac {b^2 \pi^2}{2(\ell-\ell/b)^2} + \frac {2\pi^2}{L^2}\right) \,.
\end{equation}
This last expression is $\geq 2 E^{(1)}(\alpha)$ for all $j\geq 1$ if and only if
\begin{equation}\label{cu2}
\boxed{ \  U \geq \frac{2 \alpha \ell }{ \ell/b^2-4 L}  + \frac {b \ell \pi^2}{2(\ell-\ell/b)^2} + \frac {2\pi^2 \ell}{L^2 b} \ } \qquad\text{(the $j\geq 1$ condition)} \,.
\end{equation}

\bigskip

\emph{Step 3. The case $j=0$; Energy estimate for neighboring particles.}
Because of \eqref{eq:locerror} and \eqref{boundu} we have the lower bound
\begin{equation}\label{eq:e0}
e_0 \geq E_0^{(2)}(\alpha) - \frac{\pi^2}{2(\ell-\ell/b)^2}+ \frac U\ell \,,
\end{equation}
where $E^{(2)}_0(\alpha)$ denotes the two-polaron energy in the absence of Coulomb repulsion, i.e., for $U=0$. The following lemma compares this energy with $2 E^{(1)}(\alpha)$.

\begin{Lemma}\label{lem2} For all $\alpha>0$,
\begin{equation}\label{eq:lem2}
E^{(2)}_0(\alpha) \geq 2 E^{(1)}(\alpha) - \frac 7{3} \alpha^2 \,.
\end{equation}
\end{Lemma}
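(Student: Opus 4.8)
The plan is to work directly from the functional-integral representation \eqref{eq:pathint}--\eqref{eq:pathint2} specialized to $N=2$ and $U=0$, since it displays the field exactly as a retarded, attractive interaction among the electron paths. In $Z^{(2)}_{0,R}(T)$ the double sum $\sum_{i,j=1}^2$ splits into the two diagonal (self-energy) terms $i=j$ and the two off-diagonal (cross) terms $i\neq j$. If only the self-terms were present the two Brownian integrals would decouple and the integrand would factor over the electrons, giving exactly $Z^{(2)}_{0,R}(T)\big|_{\mathrm{self}}=(Z^{(1)}_{0,R}(T))^2$, whence $-\lim_R\lim_T T^{-1}\ln(\cdots)=2E^{(1)}(\alpha)$. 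Thus Lemma~\ref{lem2} reduces to showing that the extra positive cross factor $e^{S_{\mathrm{cross}}}$, with
\[
S_{\mathrm{cross}}=\alpha\int_\R ds\, e^{-|s|}\int_0^T \frac{dt}{|\omega_1(t)-\omega_2(t+s)|},
\]
contributes at most $\tfrac73\alpha^2$ to $-\lim_R\lim_T T^{-1}\ln Z^{(2)}_{0,R}(T)$; equivalently, since $Z^{(2)}=\langle e^{S_{\mathrm{cross}}}\rangle\,(Z^{(1)})^2$, that $\lim_R\lim_T T^{-1}\ln\langle e^{S_{\mathrm{cross}}}\rangle\le \tfrac73\alpha^2$, the average being in the product of two independent single-polaron path measures, each already reweighted by its own self-interaction.

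The quantity $S_{\mathrm{cross}}$ is precisely the phonon-mediated attraction between the electrons. Freezing $\omega_1,\omega_2$ and doing the $s$-integral produces the Coulomb-like kernel $2\alpha/|x_1-x_2|$; this static limit is the origin of the threshold $U>2\alpha$. For genuine paths the weight $e^{-|s|}$ retards, hence softens, this attraction, and together with the Brownian (kinetic) fluctuations it renders the induced binding of order $\alpha^2$ rather than causing collapse. A crude orientation: treating the induced interaction as the bare attraction $-2\alpha/|x_1-x_2|$ between two particles of mass $\tfrac12$ gives a hydrogenic relative Hamiltonian $-2\Delta_r-2\alpha/|r|$ with ground-state energy $-\tfrac12\alpha^2$, already of the claimed order and comfortably below $\tfrac73\alpha^2$. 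This makes it plausible that a clean, if non-optimal, constant such as $\tfrac73$ can be secured rigorously.

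The main obstacle is that $S_{\mathrm{cross}}$ is \emph{not} bounded pointwise in the paths: it blows up on configurations where $\omega_1(t)$ meets $\omega_2(t+s)$, so one cannot estimate the integrand crudely. I would control $\langle e^{S_{\mathrm{cross}}}\rangle$ by exploiting positive-definiteness of the Coulomb kernel, writing all four interactions through $1/|x|=(2\pi^2)^{-1}\int |k|^{-2}e^{ikx}\,dk$ so that the collision singularities of the cross term are absorbed into the self-energies that build $E^{(1)}(\alpha)$, leaving a genuinely bounded remainder to be estimated. An equivalent and more transparent route is to recognize $\lim_T T^{-1}\ln\langle e^{S_{\mathrm{cross}}}\rangle$ as minus the ground-state energy of an effective two-body Schr\"odinger operator in the relative coordinate, with a smooth short-range potential of strength $O(\alpha)$, and to bound that energy by $\tfrac73\alpha^2$. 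Here the single-polaron lower bound \eqref{eq:ly}, whose $\tfrac13\alpha^2$ correction has the same structural origin, is the natural point of comparison and the likely source of the explicit constant. Proving this estimate \emph{uniformly} in $T$ and $R$ is where the real work lies.

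Finally, I note a tempting operator-theoretic shortcut that I would \emph{not} use: splitting $H_f=\tfrac12H_f+\tfrac12H_f$ between the electrons and invoking the scaling relation $\infspec(p^2-\sqrt\alpha\,\phi+cH_f)=c\,E^{(1)}(\alpha c^{-3/2})$ yields the exact but lossy bound $E^{(2)}_0(\alpha)\ge E^{(1)}(2^{3/2}\alpha)$. Near $\alpha=0$ this only gives a binding of order $\alpha$, whereas the true effect—and the content of Lemma~\ref{lem2}—is of order $\alpha^2$; hence it fails to establish the stated inequality at small coupling, which is exactly why the functional-integral estimate above is the approach I would commit to.
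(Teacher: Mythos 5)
Your setup is right---the paper's proof also works entirely in the functional integral \eqref{eq:pathint2} with $N=2$, $U=0$, and also finishes by invoking the Lieb--Yamazaki bound \eqref{eq:ly} together with \eqref{eq:llp}---but the central estimate is missing, and the routes you sketch for it do not close. Writing $Z^{(2)}_{0,R}(T)=\langle e^{S_{\mathrm{cross}}}\rangle\,(Z^{(1)}_{0,R}(T))^2$ leaves you with an exponential moment of an unbounded, \emph{long-range} functional (not short-range: after the $s$-integration the induced attraction is Coulombic, of order $2\alpha/|x_1-x_2|$ at large separation) taken under the \emph{interacting} product measure, i.e.\ the two path measures already tilted by their own self-energies. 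Neither of your two suggestions (absorbing the collision singularity into the self-energies via positive definiteness, or identifying the limit with the ground state of an effective two-body operator) is carried out, and you yourself flag that this is ``where the real work lies.'' That work is precisely the content of the lemma.

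The device the paper uses, and which your proposal lacks, is a Cauchy--Schwarz inequality applied directly to the path integral: split the exponent into self-terms and cross-terms and bound
\begin{equation*}
Z^{(2)}_{0,R}(T)^2 \le \Bigl(\text{self-terms at coupling }2\alpha\Bigr)\cdot\Bigl(\text{cross-terms at coupling }4\alpha\Bigr)\,,
\end{equation*}
so that each factor is evaluated under the \emph{free} Wiener measure rather than a tilted one. The first factor is the square of the single-polaron partition function at coupling $2\alpha$; for the second, Jensen's inequality in the $s$-variable together with time-reparametrization invariance of closed Brownian paths removes the retardation $e^{-|s|}$ and reduces it to two particles with an instantaneous attraction $-4\alpha/|x_1-x_2|$, a positronium-type problem with ground state energy $-2\alpha^2$. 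This yields $E^{(2)}_0(\alpha)\ge E^{(1)}(2\alpha)-\alpha^2$ as in \eqref{eq:jensenbound}, and then $E^{(1)}(2\alpha)\ge -2\alpha-\tfrac43\alpha^2\ge 2E^{(1)}(\alpha)-\tfrac43\alpha^2$ gives the constant $\tfrac73=\tfrac43+1$. Your hydrogenic heuristic correctly locates the order $\alpha^2$, but without the Cauchy--Schwarz decoupling you have no rigorous control of the cross term, so the proposal as it stands does not prove the lemma.
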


Also this lemma uses the path integral formulation and we defer the proof to Subsection \ref{sec:pathintproofs}. At this point we will utilize it to conclude the proof of Theorem \ref{thm2}. The constant $7/3$ in \eqref{eq:lem2} is certainly not optimal, and an improvement would lead to a better constant in Theorem \ref{thm2}.

It follows from \eqref{eq:e0} and \eqref{eq:lem2} that $e_0\geq 2 E^{(1)}(\alpha)$ if 
\begin{equation}\label{cu1}
\boxed{\ U \geq \frac 7{3} \ell \alpha^2  +  \frac{\pi^2 \ell}{2(\ell-\ell/b)^2} \ }
\qquad\text{(the $j=0$ condition)} \,.
\end{equation}
Numerical evaluation shows that the two conditions (\ref{cu2}) and (\ref{cu1}) on $U$ can be satisfied for an appropriate choice of $b$, $\ell$, and $L$ if $U \geq 61\, \alpha$. (Choose $b=1.2$, $\ell= 22.8\, \alpha^{-1}$ and $L = 0.142 \, \ell$.) For $U$ satisfying these conditions, each $e_j\geq 2E^{(1)}(\alpha)$. This completes the proof of Theorem~\ref{thm2} with the bound on the constant $C< 30.5$.
\qed

\bigskip

In order to improve the bound on the constant $C$ of Theorem~\ref{thm2}, we replace Lemma~\ref{lem3} by the following alternative bound.

\begin{Lemma}\label{lem4}
Under the same assumptions as in Lemma~\ref{lem3},
\begin{equation}\label{imp_bound}
\langle \psi| H^{(2)}_0|\psi\rangle \geq 2 E^{(1)}(\alpha) - \left\langle \psi \left|  \frac{2 \alpha}{|x_1-x_2|} \right|\psi\right\rangle - \frac {16\alpha R}{\pi^2 d(d+4R)} \,.
\end{equation}
\end{Lemma}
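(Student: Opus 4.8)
The plan is to refine the proof of Lemma~\ref{lem3}: rather than discarding the Coulomb attraction by the crude estimate $C\le 2\alpha T/d$, I would keep its singular, equal-time part as an explicit potential and only estimate the retardation. Since $\psi$ is normalized and supported in $B_1\times B_2$, the variational principle lets me replace $H^{(2)}_0$ by its Dirichlet realization on the two balls; adding and subtracting the repulsive potential $2\alpha/|x_1-x_2|$ then yields
\[
\langle\psi| H^{(2)}_0 |\psi\rangle \ge \infspec\Big( H^{(2)}_0 + \tfrac{2\alpha}{|x_1-x_2|}\Big)^{D} - \Big\langle\psi\Big|\tfrac{2\alpha}{|x_1-x_2|}\Big|\psi\Big\rangle,
\]
where the superscript $D$ denotes Dirichlet conditions on $B_1\times B_2$. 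The subtracted expectation is exactly the middle term of \eqref{imp_bound}, so the task reduces to showing that the confined, Coulomb-shifted ground state energy is at least $2E^{(1)}(\alpha) - 16\alpha R/(\pi^2 d(d+4R))$.

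For that I would use the functional integral \eqref{eq:pathint}--\eqref{eq:pathint2}, in which the Dirichlet conditions force the Brownian paths $\omega_1,\omega_2$ to remain in $B_1$, respectively $B_2$. Integrating out the phonons produces the two single-particle self-interactions together with the cross term
\[
C[\omega_1,\omega_2] = \alpha\int_\R ds\, e^{-|s|}\int_0^T \frac{dt}{|\omega_1(t)-\omega_2(t+s)|},
\]
while the added potential contributes $-2\alpha\int_0^T dt/|\omega_1(t)-\omega_2(t)|$ to the exponent. Because $\int_\R e^{-|s|}\,ds = 2$, this cancels the equal-time part of $C$ and leaves the retardation remainder
\[
\mathcal R[\omega_1,\omega_2] = \alpha\int_\R ds\, e^{-|s|}\int_0^T dt\,\Big(\frac{1}{|\omega_1(t)-\omega_2(t+s)|} - \frac{1}{|\omega_1(t)-\omega_2(t)|}\Big).
\]
Once $\mathcal R$ is isolated, the weight $e^{\text{self}}$ factorizes over the two confined path measures $\mu_1\otimes\mu_2$, each normalizing to a single-particle Dirichlet polaron partition function whose energy is $\ge E^{(1)}(\alpha)$ (confinement only raises the single-polaron energy). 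The whole effect of the retardation then sits in $\langle e^{\mathcal R}\rangle_{\mu_1\otimes\mu_2}$, and the claim reduces to $T^{-1}\ln\langle e^{\mathcal R}\rangle_{\mu_1\otimes\mu_2}\le 16\alpha R/(\pi^2 d(d+4R))$ as $T\to\infty$.

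The hard part is precisely this last estimate, and it is where the constant $16/\pi^2$ is pinned down. The crude bound $|x_1-x_2|^{-1}\in[(d+4R)^{-1},d^{-1}]$ on the two balls gives only $2\alpha\big(d^{-1}-(d+4R)^{-1}\big) = 8\alpha R/(d(d+4R))$ per unit time, larger than the target by a factor $\pi^2/2$, and it would merely reproduce a variant of Lemma~\ref{lem3}. To gain the factor one must exploit that, for fixed $\omega_1(t)\in B_1$, the function $y\mapsto |\omega_1(t)-y|^{-1}$ is \emph{harmonic} on $B_2$: along the confined motion $\omega_2$ it is a killed martingale, so its retarded and equal-time values can differ only through the Dirichlet killing at $\partial B_2$. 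Writing $\int_\R ds\, e^{-|s|}(\cdot)$ as twice the resolvent $(1-\partial_s^2)^{-1}$ and using harmonicity should convert $\mathcal R$ into a boundary contribution controlled by the lowest Dirichlet eigenvalue of the ball -- the source of the $\pi^2$ -- while the oscillation $d^{-1}-(d+4R)^{-1}=4R/(d(d+4R))$ of the Coulomb potential supplies the remaining $R$- and $d$-dependence. Making this rigorous, in particular bounding the exponential moment $\langle e^{\mathcal R}\rangle$ rather than $\mathcal R$ pointwise and tracking the interplay of the retardation kernel with the Dirichlet semigroup, is the main obstacle.
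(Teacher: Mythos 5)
Your reduction to the retardation remainder $\mathcal R$ is a legitimate and correctly executed first step: adding and subtracting $2\alpha/|x_1-x_2|$, passing to the Dirichlet realization, and cancelling the equal-time part of the phonon-induced attraction against the added repulsion is all sound (and the symmetrization of the cross terms using $\int_\R e^{-|s|}\,ds=2$ is right). But the proof stops exactly where the lemma's content begins: you do not establish $T^{-1}\ln\langle e^{\mathcal R}\rangle_{\mu_1\otimes\mu_2}\le 16\alpha R/(\pi^2 d(d+4R))$, you note that the only bound you can actually prove ($8\alpha R/(d(d+4R))$, off by a factor $\pi^2/2$) fails to give the stated constant, and you explicitly defer the exponential-moment estimate. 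The harmonicity/martingale heuristic you offer for recovering the factor is also not on track: under the Dirichlet-killed path measure \emph{conditioned on survival} (which is what the normalized measures $\mu_i$ are), composing a harmonic function with the path is no longer a martingale, so the retarded and equal-time values do not differ ``only through boundary killing''; and even a mean-zero statement for $\mathcal R$ would not control $\langle e^{\mathcal R}\rangle$, since Jensen's inequality runs in the wrong direction for the upper bound you need. This is a genuine gap, not a routine detail.

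For comparison, the paper's proof takes an entirely different, non-probabilistic route, and the constant $16/\pi^2$ has a different origin than you conjecture. One rewrites the interaction in position space, $\sqrt\alpha\,\phi(x)=(\sqrt\alpha/\pi^{3/2})\int dy\,(\hat a^\dagger(y)+\hat a(y))|x-y|^{-2}$, splits the phonon modes along the plane midway between the two balls into half-spaces $S_1\ni B_1$ and $S_2\ni B_2$, and completes the square so that each particle couples only to the phonons in its own half-space; the completed squares are dropped after being compared, via a coherent-state trial function for the far half-space, with the single-polaron energy $E^{(1)}(\alpha)$. The cross terms produce explicit integrals of $|x_i-y|^{-4}$ and $|x_1-y|^{-2}|x_2-y|^{-2}$ over half-spaces, computable in cylindrical coordinates: the full-space integral reproduces exactly $-2\alpha/|x_1-x_2|$, while the half-space self-energy integrals give $-\alpha/(\pi^2|x_i\cdot n|)$ and $+\alpha/(\pi^2(d/2+2R))$, whose combination (using $|x_i\cdot n|\ge d/2$) is precisely $-16\alpha R/(\pi^2 d(d+4R))$. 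So the $\pi^2$ comes from the $\pi^{-3}$ normalization of the position-space coupling together with half-space Coulomb integrals, not from a Dirichlet eigenvalue of the ball. If you want to salvage your approach you would need a genuinely new estimate on the exponential moment of the retardation under the confined path measure; as written, the lemma is not proved.
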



 In the appendix we provide a proof 
 of this lemma using the Rayleigh-Ritz variational principle. The proof is certainly not easier than the one of Lemma~\ref{lem3} using
the functional integral method, but it may be of use in other
applications where a functional integral approach is not as convenient (or available).  The method
 does point up the utility of  localizing the phonon field about
the respective particles -- in this case, to half-spaces each containing a particle.

 We apply the bound (\ref{imp_bound}) to (\ref{48}), with $R= b^{j} L$ and $d$ satisfying (\ref{bounddd}), and conclude that
\begin{equation}
e_j \geq 2 E^{(1)}(\alpha) + b^{-j} \frac{U-2\alpha}\ell - b^{-j}\frac {16\alpha  L b^2}{\ell \pi^2 (\ell/b^2-4 L) } - b^{-2j} \left( \frac {b^2 \pi^2}{2(\ell-\ell/b)^2} + \frac {2\pi^2}{L^2}\right)\,.
\end{equation}
This expression is $\geq 2 E^{(1)}(\alpha)$ for all $j\geq 1$ if and only if  
\begin{equation}\label{cu3}
\boxed{\  U \geq 2\alpha + \frac {16\alpha  L b^2 }{ \pi^2 (\ell/b^2-4 L) }  + \frac {b \ell \pi^2}{2(\ell-\ell/b)^2} + \frac {2\pi^2 \ell}{L^2 b}\ }
\end{equation}
Eqs.~(\ref{cu1}) and (\ref{cu3}) are satisfied for $U \geq  53.2\, \alpha$ with the choice  $b=1.23$, $\ell= 19.7\, \alpha^{-1}$ and $L = 0.15 \, \ell$.


\subsection{Some uses of the path integral}\label{sec:pathintproofs}

Lemmas \ref{lem3} and \ref{lem2}, used in the previous subsection, will be proved here.

\begin{proof}[Proof of Lemma \ref{lem3}]
  We use a Feynman-Kac representation similar to \eqref{eq:pathint}. It implies that the infimum of the left side of (\ref{ls}) over all $\psi$ with the required support properties equals
\begin{equation}
 -\lim_{T\to\infty} \frac 1 T \ln \, Z_{B_1,B_2}(T) 
\end{equation}
where
\begin{align}\nonumber
Z_{B_1,B_2}(T) & := \int_{B_1} dx_1 \int_{B_2} dx_2 \int dW_{x_1}^T(\omega_1) dW_{x_1}^{T}(\omega_2) \chi_{B_1}(\omega_1)  \chi_{B_2}(\omega_2) \\ 
& \qquad \times \exp\left( \alpha \int_\R \frac{ds\, e^{-|s|}}{2} 
\sum_{i,j=1}^2  \int_0^T  \frac {dt}{|\omega_i(t)-\omega_j(t+s)|} \right)\,.
\end{align}
Here $dW_{x_j}^T$ denotes the Wiener measure of closed Brownian paths in
$\R^3$ with period $T$ starting and ending at $x_j$, and $\chi_{B_j}(\omega_j)$ is 1 if $\omega_j$
stays inside the ball $B_j$ for all times, and zero otherwise. 
Since
$|\omega_1(t)-\omega_2(t+s)|\geq d$ for all $t$ and $s$ we see that $Z_{B_1,B_2}(T)$ is bounded from above by 
$$
e^{2\alpha T/d} \prod_{j=1}^2 \left( \!\int_{B_j} \!\!dx \!\!\int \!\!dW_x^T(\omega_j) \chi_{B_j}(\omega_j) \exp\left( \alpha \int_\R \!\frac{ds\, e^{-|s|}}{2} \int_0^T \!\!\frac {dt}{|\omega_j(t)-\omega_j(t+s)|} \right)    \right).
$$
Replacing $\chi_{B_j}(\omega_j)$ by its upper bound $1$, we deduce inequality \eqref{ls}.
\end{proof}

\begin{proof}[Proof of Lemma \ref{lem2}]
Application of the Cauchy-Schwarz inequality in the path integral \eqref{eq:pathint2} yields
\begin{align}\nonumber
Z^{(2)}_{0,R}(T)^2 & \leq  \iint dW^T_R(\omega_1) dW^{T}_R(\omega_2) \\
\nonumber & \qquad\quad \times \exp\left( 2 \alpha \int_\R  \frac{ds\, e^{-|s|}}{2} \sum_{i=1}^2  \int_0^T  \frac {dt }{|\omega_i(t)-\omega_i(t + s)|}\right) \\
\nonumber & \quad  \times \iint dW_R^T(\omega_1) dW_R^{T}(\omega_2) \\
& \qquad\quad  \times \exp\left( 4 \alpha  \int_\R \frac{ds\, e^{-|s|}}{2}  \int_0^T  \frac {dt}{|\omega_1(t)-\omega_2(t + s)|}\right) \,,
\end{align}
where $\int dW_R^T(\omega)$ is short for $\int_{B_R} dx \int dW_x^T(\omega)\chi_{B_R}(\omega)$.
The first factor on the right side equals the square of 
\begin{equation}
 \int dW_R^T(\omega) \exp\left( 2 \alpha \int_\R \frac{ds\, e^{-|s|}}{2} \int_0^T   \frac {dt}{|\omega(t)-\omega(t+ s)|}\right) \,,
\end{equation}
which, in turn, is the one-polaron expression with $\alpha$ replaced
by $2\alpha$.  Using Jensen's inequality, we bound the second factor
from above by
\begin{align}\label{30}
  \int_\R \frac{ds\, e^{-|s|}}{2} \iint dW_R^T(\omega_1)
  dW_R^{T}(\omega_2) \exp\left( 4 \alpha \int_0^T \frac
    {dt}{|\omega_1(t)-\omega_2(t + s)|}\right) \,.
\end{align}
Since closed Brownian paths are invariant under time reparametrization,
the latter integral does not actually depend on $s$, and hence
(\ref{30}) equals
\begin{equation}
\iint dW_R^T(\omega_1) dW_R^{T}(\omega_2) \exp\left( 4 \alpha    \int_0^T  \frac {dt}{|\omega_1(t)-\omega_2(t)|}\right) \,. 
\end{equation}
This functional integral represents two particles in the ball $B_R$ interacting via an attractive Cou\-lomb
potential $-4\alpha/|x_1-x_2|$. This is like the positronium
Hamiltonian whose ground state energy equals
$-2\alpha^2$ in the limit $R\to\infty$. Summarizing, after taking the $T\to\infty$ limit we find
that
\begin{equation}\label{eq:jensenbound}
E^{(2)}_0(\alpha) \geq E^{(1)}(2\alpha) - \alpha^2 \,.
\end{equation}
To finish the proof, we use the bounds  \eqref{eq:llp} and \eqref{eq:ly},
which imply that 
\begin{equation}
E^{(1)}(2\alpha) \geq 2 E^{(1)}(\alpha) -  \frac 43 \alpha^2 \ .
\end{equation}
This, together with \eqref{eq:jensenbound}, proves \eqref{eq:lem2}.
\end{proof}


\section{The $N$-polaron problem: Thermodynamic Stability}

We now consider the case of general $N$ and prove Theorem \ref{thm:stabalpha}. 
We start by localizing particles in balls in order to reduce the problem to
  a local one. We use the sliding technique introduced in \cite{CLY}
  (see also \cite{LS1}).  Pick an even and real-valued function $\chi$
  with compact support, normalized by $\int \chi^2 = 1$, and
  $\omega>0$ large enough such that the function
\begin{equation}
f(x) = \frac 1{|x|} \left( 1- e^{-\omega|x|} \chi*\chi(x)\right)
\end{equation}
is positive definite. (The symbol $*$ means convolution.) The existence
of such an $\omega$ for smooth enough $\chi$ was shown in
\cite[Lemma~2.1]{CLY}. For any operator-valued function $\rho(x)$,
\begin{equation}\label{eq:posdef}
\iint dx\,dy \left( \sum_{i=1}^N \delta(x-x_i) -\rho(x)^\dagger \right) f(x-y) \left( \sum_{i=1}^N \delta(y-x_i) -\rho(y) \right)\geq 0\,.
\end{equation}
We apply this to 
\begin{equation}
\rho(x) = \frac {1}{ (2\pi)^2\sqrt{2\alpha}} \int dk\, |k|  e^{i k x} a(k) 
\end{equation}
and obtain the bound
\begin{equation}
\sum_{1\leq i<j\leq N} \frac{2\alpha}{|x_i-x_j|} - 
\sqrt{\alpha}\sum_{i=1}^N \phi(x_i) + H_f \geq  - \alpha N \omega + 
2\alpha  \int_{\R^3} dz\, I_\omega(z) \ .
\end{equation}
Here,
\begin{align}\nonumber
I_\omega(z) & :=   \sum_{1\leq i<j\leq N} \chi_z(x_i)\frac{e^{-\omega|x_i-x_j|}}{|x_i-x_j|}\chi_z(x_j)  \\ \nonumber & \quad - \frac 12 \sum_{i=1}^N \chi_z(x_i) \int dy \frac{e^{-\omega|x_i-y|}}{|x_i-y|}\chi_z(y)\left(\rho(y)+\rho(y)^\dagger\right) \\ &  \quad + \frac 12 \iint dx\,dy\, \chi_z(x)\rho(x)^\dagger \frac{e^{-\omega|x-y|}}{|x-y|}\rho(y)\chi_z(y)
\label{eq:iomega}
\end{align}
where we denote $\chi_z(x)=\chi(x-z)$. We also note that 
\begin{equation}
p^2 = \int dz \,p \chi_z^2 p = \int dz\, \chi_z p^2 \chi_z - \int dx\,|\nabla\chi(x)|^2,
\end{equation}
and thus
\begin{equation}\label{hhz}
H^{(N)}_U \geq \int dz\, H_z + \left(U-2\alpha\right) V_C -  \alpha N \omega - \frac N2 \int dx\, |\nabla\chi(x)|^2 
\end{equation}
where
\begin{equation}
H_z  := \frac 12 \sum_{i=1}^N \left(  \chi_z(x_i)p_i^2 \chi_z(x_i) + p_i \chi_z(x_i)^2 p_i\right) + 2\alpha \, I_\omega(z) \,.
\end{equation}

The Hamiltonian $H_z$ is concerned only with the particles in the
support of $\chi_z$; similarly for the phonon field,
$\rho(y)$ enters only for $y$ in this support. Moreover, $H_z$
commutes with $n_z= \sum_{i=1}^N \theta_z(x_i)$, the number of
particles in the support of $\chi_z$, where $\theta_z$ denotes the
characteristic function of the support of $\chi_z$. We can thus look
for a lower bound on $H_z$ in a fixed sector of $n_z$ particles. We
will prove the following lower bound.

\begin{Lemma} With $[t]_+ = \max\{t,0\}$,
\begin{equation}\label{hzlow}
H_z \geq -  \alpha  \left[ 4 \alpha n_z - \omega \right]_+ \sum_{i=1}^N \chi_z(x_i)^2 - \frac{3\alpha n_z}{(2\pi)^4} \left( \sqrt{\frac{2\pi}{3\omega}} \|\chi\|_\infty  +\|\nabla \chi\|_2  \right)^2 \,.
\end{equation}

\end{Lemma}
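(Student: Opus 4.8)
The plan is to fix the sector of particle number $n_z=\sum_i\theta_z(x_i)$ (legitimate since $H_z$ commutes with $n_z$) and to treat the interaction $2\alpha I_\omega(z)$ as a quadratic form in the phonon field which I then minimize explicitly. Writing $Y_\omega(x)=e^{-\omega|x|}/|x|$ and $\chi_z(x)=\chi(x-z)$, I would first split $2\alpha I_\omega(z)$ into a field-quadratic part $Q_{\mathrm{ff}}=\alpha\iint\chi_z(x)\rho(x)^\dagger Y_\omega(x-y)\rho(y)\chi_z(y)\,dx\,dy\ge0$, a part linear in the field $Q_{\mathrm{lin}}=-\alpha\int V_z(y)(\rho(y)+\rho(y)^\dagger)\,dy$ with $V_z(y)=\chi_z(y)\sum_i\chi_z(x_i)Y_\omega(x_i-y)$, and a field-independent repulsion $Q_{\mathrm{pp}}=2\alpha\sum_{i<j}\chi_z(x_i)Y_\omega(x_i-x_j)\chi_z(x_j)\ge0$.

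The second step is to complete the square in the field. Since $Y_\omega$ is positive definite I factor $Y_\omega=G_\omega*G_\omega$ and set $A_w=\sqrt\alpha\int G_\omega(x-w)\chi_z(x)\rho(x)\,dx$ and $c_w=\sum_i\chi_z(x_i)G_\omega(x_i-w)$, so that $Q_{\mathrm{ff}}=\int A_w^\dagger A_w\,dw$ and $Q_{\mathrm{lin}}=-\sqrt\alpha\int c_w(A_w+A_w^\dagger)\,dw$. Then $Q_{\mathrm{ff}}+Q_{\mathrm{lin}}=\int(A_w-\sqrt\alpha\,c_w)^\dagger(A_w-\sqrt\alpha\,c_w)\,dw-\alpha\int c_w^2\,dw$, and since $\int c_w^2\,dw=\sum_{i,j}\chi_z(x_i)Y_\omega(x_i-x_j)\chi_z(x_j)$, its off-diagonal part cancels $Q_{\mathrm{pp}}$ exactly. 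This cancellation of the phonon-mediated inter-particle attraction against the residual Coulomb term is precisely what the coupling $2\alpha$ and the Yukawa range built into the positive-definite function $f$ are designed to accomplish. What survives is the diagonal self-energy $-\alpha\sum_i\chi_z(x_i)^2\,Y_\omega(0)$, which equals $-\infty$; hence the naive field minimization is too crude and the kinetic energy must be brought in.

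The decisive and hardest step is therefore to keep the nonnegative fluctuation $\int(A_w-\sqrt\alpha\,c_w)^\dagger(A_w-\sqrt\alpha\,c_w)\,dw$ and play it against the localized kinetic energy $T_z=\frac{1}{2}\sum_i(\chi_z(x_i)p_i^2\chi_z(x_i)+p_i\chi_z(x_i)^2p_i)$ in order to regularize the self-energy. Concretely, I would pass to momentum space and introduce a cutoff $\Lambda$: the phonon modes with $|k|\le\Lambda$ give a finite self-energy controlled by $\|\chi\|_\infty$ and, through the Yukawa weight $(k^2+\omega^2)^{-1}$, by $\omega$, while the modes $|k|>\Lambda$ are absorbed by $T_z$ via the uncertainty principle (an integration by parts transfers the factor $|k|$ in $\rho$ onto $\nabla\chi$, producing $\|\nabla\chi\|_2$). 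Optimizing over $\Lambda$ and combining the two pieces by the triangle inequality should produce the factor $(\sqrt{2\pi/(3\omega)}\,\|\chi\|_\infty+\|\nabla\chi\|_2)^2$, with the prefactor $3\alpha n_z/(2\pi)^4$ emerging from the radial momentum integral in three dimensions and the extra $|k|^2$ in $\rho$.

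Finally, the density-dependent binding term should come from the portion of the attraction that the cutoff cannot remove: bounding the regularized kernel by its value at the origin and using $\big(\sum_i\chi_z(x_i)\big)^2\le n_z\sum_i\chi_z(x_i)^2$ converts the net attraction felt by each of the $n_z$ particles in the cell into $-\alpha[4\alpha n_z-\omega]_+\sum_i\chi_z(x_i)^2$, with $\omega$ appearing as a threshold because the positivity of $Q_{\mathrm{ff}}$ (the phonon energy scale) can absorb an attraction of strength up to $\omega$, leaving only the excess. I expect the principal obstacle to be executing the momentum-cutoff regularization so that the divergent self-energy is tamed by $T_z$ without spoiling the exact cancellation of the inter-particle attraction, and then tracking the three-dimensional momentum integrals carefully enough to arrive at the stated coefficients.
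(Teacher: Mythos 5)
Your overall architecture is right -- split $I_\omega(z)$ into particle--particle, linear-in-field and quadratic-in-field pieces, recognize that the naive completion of the square in the field produces an exact cancellation of the Yukawa repulsion at the price of a divergent diagonal self-energy, and then regularize with a momentum cutoff $\Lambda$ whose high-momentum part must be absorbed by the localized kinetic energy. This matches the paper's strategy in outline, and you correctly name the divergence as the central difficulty. But the step you defer -- ``executing the momentum-cutoff regularization so that the divergent self-energy is tamed by $T_z$'' -- is the entire content of the lemma, and the mechanisms you gesture at for it are not the ones that work. A one-particle operator $p_i^2$ cannot ``absorb'' a field operator by an integration by parts alone; the paper's device is an \emph{operator} completion of the square between the momentum and a vector field operator, namely $\left(\chi_z p_i - A_z(x_i)\right)\left(p_i\chi_z - A_z(x_i)^\dagger\right) + \left(p_i\chi_z + A_z(x_i)^\dagger\right)\left(\chi_z p_i + A_z(x_i)\right) \geq 0$ with $A_z(x)$ built from the kernel $k/\bigl(k^2(k^2+\Lambda^2)\bigr)$. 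This trades the linear field term for $2A_z^\dagger A_z + [A_z,A_z^\dagger]$; the commutator is a c-number giving the $\|\chi\|_\infty$, $\|\nabla\chi\|_2$ term, and $A_z^\dagger A_z$ is bounded via Schwarz by the \emph{positive quadratic field term} of $I_\Lambda(z)$ times $\tfrac{2\alpha^2}{\pi^2}\int dk\, k^{-2}(k^2+\Lambda^2)^{-1} = 4\alpha^2/\Lambda$.

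This last Schwarz bound is also where the $n_z$-dependence really comes from, and your proposed route to it via $\bigl(\sum_i\chi_z(x_i)\bigr)^2 \leq n_z\sum_i\chi_z(x_i)^2$ would not produce the stated coefficient. Since all $n_z$ particles in the cell are bounded against the \emph{same} field-quadratic term, relative boundedness requires $4\alpha^2 n_z/\Lambda \leq \alpha$, i.e. $\Lambda \geq 4\alpha n_z$; the cost of raising the Yukawa range from $\omega$ to $\Lambda=\max\{\omega,4\alpha n_z\}$ (via positive definiteness of $|x|^{-1}(e^{-\omega|x|}-e^{-\Lambda|x|})$, whose diagonal is $\Lambda-\omega$) is exactly $\alpha[4\alpha n_z-\omega]_+\sum_i\chi_z(x_i)^2$. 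In particular the paper does \emph{not} perform your exact cancellation of $Q_{\mathrm{pp}}$ against the field square -- it simply discards the positive Yukawa repulsion -- and $\omega$ enters as a threshold through this range-shifting step, not through ``positivity of $Q_{\mathrm{ff}}$ absorbing attraction up to strength $\omega$.'' Without these two ingredients (the operator square with the momentum, and the Schwarz absorption into the field-quadratic term) your outline cannot be completed to yield the specific constants in the statement.
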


\begin{proof}
Pick some $\Lambda\geq \omega$. Applying \eqref{eq:posdef} with the positive definite function $f(x) = |x|^{-1}(e^{-\omega|x|} - e^{-\Lambda|x|})$, we have 
\begin{equation}
I_\omega(z) \geq  I_\Lambda(z) -  \frac 1 2  \left(\Lambda-\omega\right) \sum_{i=1}^N \chi_z(x_i)^2 \,.
\end{equation}
The last term represent the `self-energy' terms.
We will choose $\Lambda$ proportional to $n_z$, hence this term is of the order $n_z^2$. 
For the remaining terms, we complete the square and write
\begin{align}\nonumber
&   \chi_z(x_i)p_i^2 \chi_z(x_i) + p_i \chi_z(x_i)^2 p_i -   2\alpha  \chi_z(x_i) \int dy \frac{e^{-\Lambda|x_i-y|}}{|x_i-y|}\chi_z(y)\left(\rho(y)+\rho(y)^\dagger\right) \\ \nonumber &=  \left(  \chi_z(x_i) p_i - A_z(x_i)\right)\left(p_i\chi_z(x_i) - A_z(x_i)^\dagger\right) \\ \nonumber & \quad  +   \left(   p_i\chi_z(x_i) + A_z(x_i)^\dagger\right)\left(\chi_z(x_i)p_i + A_z(x_i)\right) \\ & \quad - 2 A_z(x_i)^\dagger A_z(x_i) - \left[ A_z(x_i), A_z(x_i)^\dagger \right] \,, \label{eq:comm}
\end{align}
where $A_z(x)$ is a vector operator with three components,
\begin{equation}
A_z(x) :=\frac \alpha{\pi^2} \int dy\, \chi_z(y)\rho(y) \int dk \frac{k\, e^{ik(y-x)}}{k^2\left(k^2+\Lambda^2\right)} \,.
\end{equation}
The Schwarz inequality applied to this last equation shows that 
\begin{equation}
A_z(x)^\dagger A_z(x) \leq  \int dw\, dy\, \chi_z(w)\rho(w)^\dagger \frac{e^{-\Lambda|w-y|}}{|w-y|}\rho(y)\chi_z(y) \,  \frac {2\alpha^2}{\pi^2} \int dk \frac {  1}{k^2(k^2+\Lambda^2)}\,.
\end{equation}
Moreover,
\begin{align}\nonumber
[A_z(x),A_z(x)^\dagger] 
&= \frac{2\alpha}{(2\pi)^4} \sum_{i=1}^3 \int \left| \nabla_y\left( \chi_z(y) f_i(\Lambda(x-y))\right)\right|^2 dy \\ 
& \leq  \frac{2\alpha}{(2\pi)^4} \sum_{i=1}^3 \left( \Lambda^{-1/2} \|\chi\|_\infty \|\nabla f_i\|_2 +\|\nabla \chi\|_2 \|f_i\|_\infty  \right)^2 \label{eq:commbound}
\end{align}
with 
\begin{equation}
f_i(x) = \frac {x_i}{|x|} \frac{1-(1+|x|)e^{-|x|}}{|x|^2} \quad , \quad x=(x_1,x_2,x_3) \,.
\end{equation}
We choose $\Lambda$ such that 
\begin{equation}
n_z \frac{2\alpha^2}{\pi^2} \int dk \frac {1}{k^2(k^2+\Lambda^2)} =  \frac{4\alpha^2 n_z}{\Lambda} \leq   \alpha \,,
\end{equation}
which then assures that the last two terms of equation \eqref{eq:comm} are relatively bounded by the field energy terms, i.e., the last term of $I_{\omega}(z)$ in \eqref{eq:iomega}. More precisely, with the choice $\Lambda= \max\{\omega, 4 \alpha n_z\}$ Lemma 4 follows, using the facts $\|f_i\|_\infty = 1$, $\|\nabla f_i\|_2^2= 2\pi/3$, and the commutator bound \eqref{eq:commbound}, this bound contributing the last term on the right side of \eqref{hzlow}.
\end{proof}

We now complete the proof of Theorem 3. If we insert bound (\ref{hzlow}) into (\ref{hhz}), we obtain the following lower bound on $H^{(N)}_U$:
\begin{align}\nonumber 
H^{(N)}_U & \geq -  \alpha  \int dz \left(\omega + \left[ 4\alpha n_z-\omega\right]_+\right) \sum_{i=1}^N \chi_z(x_i)^2 + \left(U - 2\alpha\right) V_C  \\ & \quad - \frac N 2 \int dx\,|\nabla\chi(x)|^2 -  \frac{3\alpha}{(2\pi)^4} N\, |{\rm supp \,}\chi| \left( \sqrt{\frac{2\pi}{3\omega}} \|\chi\|_\infty  +\|\nabla \chi\|_2  \right)^2\,.
\end{align}
The volume of the support of $\chi$, $|{\rm supp\,}\chi|=\int \theta_0$, enters via the identity $\int dz\, n_z = N |{\rm supp\,} \chi|$. 
We further bound $ \left[ 4\alpha n_z-\omega\right]_+ \leq 4\alpha n_z$, 
and use that 
\begin{align}\nonumber
\int dz \, n_z \sum_{i=1}^N \chi_z(x_i)^2  & =  \sum_{i,j=1}^N  \int dz \, \theta_z(x_j)  \chi_z(x_i)^2 \\ & =  2 \sum_{1\leq i< j \leq N}  \int dz \, \theta_z(x_j)  \chi_z(x_i)^2 +  N \,.
\end{align}
 Moreover, 
\begin{equation}
\int dz \, \theta_z(x_j)  \chi_z(x_i)^2 \leq \frac{Z}{|x_i-x_j|}
\end{equation}
with 
\begin{equation}
Z := \sup_{x\in\R^3} |x| \left(\theta_0*\chi^2\right)(x)\,.
\end{equation}
The final result is
\begin{align}\nonumber
H^{(N)}_U & \geq \left( U -  2\alpha - 8 \alpha^2 Z \right) V_C    - 4\alpha^2 N -   \alpha  N \omega  \\ & \quad - \frac N 2 \|\nabla\chi\|_2^2 -  \frac{3\alpha}{(2\pi)^4} N\, |{\rm supp \,}\chi| \left( \sqrt{\frac{2\pi}{3\omega}} \|\chi\|_\infty  +\|\nabla \chi\|_2  \right)^2\,. \label{3.21}
\end{align}
Note that $Z$ is bounded above by the diameter of the support of $\chi$, which can be chosen arbitrarily small. In particular, we can choose the diameter small enough such that $8\alpha^2 Z \leq U-2\alpha$, which leads to a lower bound on $H^{(N)}_U$ that is linear in $N$. This concludes the proof of Theorem~\ref{thm:stabalpha}.\qed

\bigskip

For $U = \nu \alpha$, $\nu > 2$, our lower bound is proportional to $\alpha^2 N$ 
for large $\alpha$. To see this, we choose the diameter of the support of $\chi$ 
to be of the order $1/\alpha$. Hence $Z\sim \alpha^{-1}$, and also $\omega \sim 
\alpha$ by scaling. Moreover, $\|\nabla\chi\|_2 \sim \alpha$, $\|\chi\|_\infty\sim 
\alpha^{3/2}$ and $|{\rm supp\, } \chi| \sim \alpha^{-3}$, hence the right side of (\ref{3.21}) is of the 
desired form, namely, $-\const \alpha^2 N$ for large $\alpha$.

We conjecture that, for $U = \nu \alpha$, $\nu>2$, 
\begin{equation}\label{conj:alpha}
H_U^{(N)}\geq N  E^{(1)}(\alpha) - C_\nu N \alpha^2  \quad \text{for all $\alpha>0$,}
\end{equation}
for some constant $C_\nu$ 
depending only on $\nu$. For $N=2$ this was proved in the previous section, but 
the proof of (\ref{conj:alpha}) for general $N\geq 2$ remains an \emph{open problem}.



\section{The $N$-polaron problem: Absence of Binding}\label{sec:nobindn}

We now return to the question of binding of polarons and prove Theorem~\ref{thm:nobindingN}.  Because of subadditivity of the energy (\ref{eq:subadd}), $E^{(N)}_U(\alpha) \leq N
E^{(1)}(\alpha)$ for any $N$, $U$ and $\alpha$. Hence it remains to
prove the reverse inequality. 

We perform a localization similar to that in the two-polaron case, but {\it
  relative to the nearest neighbor}. This type of localization is one
of the main technical ingredients in our proof. As in the bipolaron
case, the goal will be to localize each particle in a box whose size is
of the same order as the distance to the closest particles, as long as
this distance is not too small.

Let $\varphi_i$ be given as in (\ref{phi1})--(\ref{phi2}), for some $\ell>0$ and $b>1$. 
If $t_i$
denotes the distance of $x_i$ to the nearest neighbor among the $x_j$,
$j\neq i$, then
\begin{equation}
1 = \sum_{j_1,\dots,j_N}  \prod_{i=1}^N |\varphi_{j_i}(t_i)|^2
\end{equation}
and, by the IMS localization formula, 
\begin{align}\nonumber
&\langle \psi | H^{(N)}_U |\psi \rangle \\ & = \sum_{j_1,\dots,j_N} \left\langle \psi \prod_i \varphi_{j_i}(t_i) \left| H^{(N)}_U - \sum_{i=1}^N \sum_{j=1}^N \sum_{k}   \left|\nabla_i   \varphi_{k}(t_j)\right|^2 \right|  \psi \prod_i \varphi_{j_i}(t_i)  \right\rangle\,.
\end{align}
We claim that the following bound on the localization error holds.

\begin{Lemma}\label{lem:locbound}
On the support of $\varphi_{j_i}(t_i) $,
\begin{equation}
\sum_{j=1}^N \sum_{k}   \left|\nabla_i \varphi_{k}(t_j)\right|^2 \leq \frac {\gamma}{(\ell-\ell/b)^2} b^{2(1- j_i)}
\end{equation}
with $\gamma := 13 (\pi /2)^2$. 
\end{Lemma}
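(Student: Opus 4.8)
The plan is to reduce the $N$-variable localization error back to the one-variable estimate \eqref{eq:locerror} already proved for the bipolaron, at the cost of a purely combinatorial factor that counts how many of the nearest-neighbor distances $t_j$ can genuinely depend on the single coordinate $x_i$. The first step is the chain rule: $\nabla_i \varphi_k(t_j) = \varphi_k'(t_j)\,\nabla_i t_j$, and since each $t_j=\min_{m\neq j}|x_j-x_m|$ is a minimum of $1$-Lipschitz functions, it is itself $1$-Lipschitz, so $|\nabla_i t_j|\leq 1$ almost everywhere. Hence $|\nabla_i \varphi_k(t_j)|^2 \leq |\varphi_k'(t_j)|^2$, and the whole double sum is dominated by $\sum_{j}\mathbf 1[\nabla_i t_j\neq 0]\,\sum_k |\varphi_k'(t_j)|^2$.

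The geometric heart of the argument is to bound the number of surviving $j$. Regarding $t_j$ as a function of $x_i$ alone, we have $t_j=\min\{|x_j-x_i|,\,c_j\}$ with $c_j$ independent of $x_i$, so for $j\neq i$ the gradient $\nabla_i t_j$ is nonzero only where $x_i$ is the strict nearest neighbor of $x_j$; together with the diagonal term $j=i$ these are the only contributions. I claim that at almost every configuration at most $12$ indices $j\neq i$ can have $x_i$ as their nearest neighbor. Indeed, if $x_{j_1}$ and $x_{j_2}$ both have $x_i$ as nearest neighbor, then in the triangle $x_i x_{j_1} x_{j_2}$ the side $|x_{j_1}-x_{j_2}|$ is at least as long as the other two; by the law of cosines the angle at $x_i$ satisfies $\cos A\leq \tfrac12$, i.e. $A\geq \pi/3$. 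Thus the unit vectors $(x_j-x_i)/|x_j-x_i|$ have pairwise angles $\geq \pi/3$, equivalently pairwise inner products $\leq \tfrac12$, and in $\R^3$ there are at most $12$ such directions — this is precisely the three-dimensional kissing number. Counting the diagonal term, at most $13$ terms survive.

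It remains to bound each surviving $\sum_k|\varphi_k'(t_j)|^2$ by $\tfrac{\pi^2}{4(\ell-\ell/b)^2}\,b^{2(1-j_i)}$. For $j=i$ this is exactly \eqref{eq:locerror}. For a contributing $j\neq i$ we have $t_j=|x_j-x_i|\geq t_i$, since $x_i$ is the nearest neighbor of $x_j$ while $t_i\leq |x_i-x_j|$. I would then exploit that the one-variable error $g(t):=\sum_k|\varphi_k'(t)|^2$ is a decreasing step function: the scaling $\varphi_{k+1}(bt)=\varphi_k(t)$ gives $g(bt)=b^{-2}g(t)$, and a short trigonometric computation (using $\sin^2+\cos^2=1$ on each overlap) yields $g(t)=\tfrac{\pi^2}{4(\ell-\ell/b)^2}b^{-2p}$ on $(b^{p-1}\ell,b^{p}\ell)$, with $g\equiv 0$ for $t\leq \ell/b$. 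Since $t_i\in\operatorname{supp}\varphi_{j_i}$ forces $t_i\geq b^{j_i-2}\ell$ when $j_i\geq 1$, every $t_j\geq t_i$ lies in an interval of index $p\geq j_i-1$, so $g(t_j)\leq \tfrac{\pi^2}{4(\ell-\ell/b)^2}b^{2(1-j_i)}$; the case $j_i=0$ follows a fortiori from the global bound $g\leq \tfrac{\pi^2}{4(\ell-\ell/b)^2}\leq \tfrac{\pi^2}{4(\ell-\ell/b)^2}b^{2}$. Multiplying the per-term bound by the at most $13$ surviving terms gives the estimate with $\gamma=13(\pi/2)^2$.

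The main obstacle is the combinatorial-geometric step: showing that at most $12$ particles can share $x_i$ as a common nearest neighbor. Once the $\pi/3$ angular separation is extracted from the law of cosines, this is the assertion that the kissing number of $\R^3$ equals $12$; the remaining work — the chain-rule bound $|\nabla_i t_j|\leq 1$ and the monotonicity/scaling of the one-variable error $g$ — is routine bookkeeping resting on the already-established estimate \eqref{eq:locerror}.
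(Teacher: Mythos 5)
Your proof is correct and follows essentially the same route as the paper's: identify the at most $12$ indices $j\neq i$ for which $x_i$ is the nearest neighbor of $x_j$ (via the $\pi/3$ angular separation from the law of cosines and the kissing number of $\R^3$), add the diagonal term to get $13$, and bound each surviving term by the one-variable estimate \eqref{eq:locerror} using $t_j\geq t_i$ together with the monotonicity of the localization error. The only difference is presentational: you spell out the chain-rule/Lipschitz bound $|\nabla_i t_j|\leq 1$ and the explicit step-function form of $\sum_k|\varphi_k'(t)|^2$, which the paper leaves implicit.
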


\begin{proof}
  Note that $\varphi_k(t_j)$ depends on $x_i$ in one of two ways.
  First, through $t_i$ when $j=i$, but also through all the $t_j$, $j\neq
  i$, where $x_i$ happens to be the nearest neighbor of $x_j$.

  We claim that there can be at most 12 of those $x_j$. If $x$ is the
  nearest neighbor of both $x_j$ and $x_k$, then $|x_j-x_k|\geq
  \max\{|x_j-x|,|x_k-x|\}$, and hence the angle between $x_j-x$ and
  $x_k-x$ is at least $\pi/3$. Think of $x$ as the center of a unit
  sphere. The lines from $x$ to each of these $x_j$'s intersects the
  unit sphere at certain points $p_j$, whose angular separation is at
  least $\pi/3$.  At each of these points $p_j$ we can, therefore,
  construct a unit sphere tangent at $p_j$ to the given sphere around
  $x$. From the packing problem we know there can be at most 12 such
  spheres. This proves the claim.

On the support of $\varphi_{j_i}(t_i)$, 
\begin{equation}
 \sum_k   \big |\nabla_i \varphi_{k}(t_i) \big|^2  \leq \frac{\pi^2}{4(\ell-\ell/b)^2} b^{2(1-j_i)}
\end{equation}
as we have already used in (\ref{eq:locerror}). If $x_i$ is the nearest neighbor of $x_j$, the same is true with $t_i$ replaced by $t_j$ on the left side, since $t_j \geq t_i$ by definition, and the left side is easily seen to be decreasing in $t_i$. This concludes the proof.
\end{proof}

We now proceed with the one-particle localization as in the
two-polaron case, localizing particle $i$ in a ball of radius $b^{j_i}
L$ centered at $u_i$, with $L< \ell/(4 b^2)$. More precisely, with $\chi$ given in (\ref{defchi}), let
\begin{equation}
\psi_{\bfj,\bfu}(X) = \psi(X) \prod_{i=1}^N \left[ \varphi_{j_i}(t_i) (b^{j_i}L)^{-3/2} \chi(b^{-j_i}(x_i-u_i)/L)\right]
\end{equation}
where $\bfj=(j_1,\dots,j_N)$ and $\bfu=(u_1,\dots,u_N)$. We have
\begin{equation}\label{norm}
\|\psi\|^2 = \sum_\bfj \int_{\R^{3N}} d\bfu\, \|\psi_{\bfj,\bfu}\|^2
\end{equation}
and, using Lemma~\ref{lem:locbound},
\begin{align}\nonumber
&\langle \psi|H^{(N)}_U|\psi\rangle \\ \nonumber &=  \sum_{\bfj} \int_{\R^{3N}} d\bfu  \left\langle \psi_{\bfj,\bfu} \left| H^{(N)} _U- \sum_{i=1}^N \left(\sum_{k_1,\dots,k_N}   \left|\nabla_i \mbox{$\prod_j$}  \varphi_{k_j}(t_j)\right|^2+ \frac{\|\nabla \chi\|_2^2}{b^{2 j_i}L^2} |\right)  \right|  \psi_{\bfj,\bfu}  \right\rangle \\ & \geq \sum_{\bfj} \int_{\R^{3N}} d\bfu  \left\langle \psi_{\bfj,\bfu} \left| H^{(N)}_U - \sum_{i=1}^N b^{-2j _i} \left( \frac {\gamma\,b^2}{(\ell-\ell/b)^2}+ \frac {\pi^2}{L^2} \right)  \right|  \psi_{\bfj,\bfu} \right\rangle \,.\label{fl}
\end{align}
In analogy with the two-particle
problem, the goal here is to show that the integrand in this
last expression is bounded below by $N E^{(1)}(\alpha) \|\psi_{\bfj,\bfu}\|^2$
which, together with (\ref{norm}), implies the conclusion of the theorem.

For given $\bfj$ and $\bfu$, let $B_i$ denote the ball of radius
$b^{j_i}L$ centered at $u_i$.  Because of our assumption
$L<\ell/(4b^2)$, the balls $B_i$ with $j_i\geq 1$ do not intersect any
of the other balls. Let $d_{ik}$ denote the distance between ball
$B_i$ and ball $B_k$. 

Recall that the ground state energy can be obtained from the $T\to
\infty$ asymptotics of the functional integral 
(\ref{eq:pathint2}).  In the case of relevance here, for states having
the aforementioned support properties of $\psi_{\bfj,\bfu}$, the
Brownian paths $\omega_i$ in the functional integral are confined to
the respective balls $B_i$. In addition to this confinement, the paths
have the property that at any given time $t$, the separation between
any $\omega_i(t)$ and its nearest neighbor among the $\omega_k(t)$,
$k\neq i$, satisfies the conditions according to the support of
$\varphi_{j_i}$. We may relabel the particles
such that $j_i=0$ for $i\leq M$, and $j_i\geq 1$ for $M<i\leq N$.  The
exponential in the functional integral is a sum of three pieces,
$\mathcal A+\mathcal B+\mathcal C$, where
\begin{equation}\label{eq:expp1}
\mathcal A = \sum_{k=M+1}^N \sum_{i=1}^{k-1}  \left( 2 \alpha \int_\R \frac{ds\, e^{-|s|}}{2} 
 \int_0^T  \frac {dt}{|\omega_i(t)-\omega_k(t+s)|} - U \int_0^T  \frac {dt}{|\omega_i(t)-\omega_k(t)|} \right)\,,
\end{equation}
\begin{equation}\label{eq:expp2}
\mathcal B = \sum_{k=M+1}^N   \alpha \int_\R \frac{ds\, e^{-|s|}}{2} 
 \int_0^T  \frac {dt}{|\omega_i(t)-\omega_i(t+s)|} \,,
\end{equation}
and
\begin{equation}\label{eq:expp3}
\mathcal C= \alpha \sum_{i,k=1}^M \int_\R \frac{ds\, e^{-|s|}}{2} 
 \int_0^T  \frac {dt}{|\omega_i(t)-\omega_k(t+s)|} 
- U \sum_{1\leq i<k\leq M} \int_0^T  \frac {dt}{|\omega_i(t)-\omega_k(t)|} \,.
\end{equation}

We first bound $\mathcal A$. For $k>M$ and $i\neq k$,  the distance $d_{ik}$ between the balls $B_i$ and $B_k$ is nonzero. Since the paths $\omega_i$ and $\omega_k$ are confined to the balls $B_i$ and $B_k$, respectively,  
$$
\mathcal A \leq T \sum_{k=M+1}^N \sum_{i=1}^{k-1}  \left( \frac{2 \alpha}{d_{ik}} - \frac U {d_{ik}+ 2L( b^{j_i} + b^{j_k}) }\right)\,.
$$
Similar to (\ref{bounddd}), 
\begin{equation}
d_{ik} \geq  b^{\max\{j_i,j_k\}-2}\ell - 4 b^{\max\{j_i,j_k\}} L \,,
\end{equation}
and hence 
\begin{equation}\label{eq:abound}
\mathcal A \leq - T \left( U \left( 1 -  \frac{4 L b^2}{\ell} \right)  -  2\alpha \right) \sum_{k=M+1}^N \sum_{i=1}^{k-1}  \frac 1{d_{ik}} \,.
\end{equation}
Under the assumption that  
$ U( 1 -  4 L b^2/\ell)>   2\alpha $
 this is negative. We not only want it to be negative, however, we also want it to dominate part of the localization error, namely,
$$
\left( \frac {\gamma\,b^2}{(\ell-\ell/b)^2}+ \frac {\pi^2}{L^2} \right)\sum_{k=M+1}^N b^{-2j _k} \,.
$$
Using the fact that $\min_{i\neq k} d_{ik} \leq \ell b^{j_k}\leq \ell b^{2j_k-1}$ we can bound 
\begin{equation}\label{eq:locbig}
 \sum_{k=M+1}^N b^{-2j _k}  \leq  2 \ell b^{-1} \sum_{k=M+1}^N \sum_{i=1}^{k-1}  \frac 1{d_{ik}} \,.
\end{equation}
From \eqref{eq:abound} and \eqref{eq:locbig}, we conclude that
\begin{equation}
 \label{eq:aest}
\mathcal A + T \left( \frac {\gamma\,b^2}{(\ell-\ell/b)^2}+ \frac {\pi^2}{L^2} \right)\sum_{k=M+1}^N b^{-2j _k} \leq 0
\end{equation}
as long as
\begin{equation}\label{U1}
\boxed{ \ 
  U \left( 1 -  \frac{4L b^2}{\ell} \right)  \geq  2\alpha + \frac{2\ell}b \left( \frac {\gamma\, b^2}{(\ell-\ell/b)^2}+ \frac {\pi^2}{L^2} \right) \,.  \ }
\end{equation}
Since we are seeking a lower bound on the energy, i.e., an upper bound on the functional integral, \eqref{U1} then implies that $\mathcal A$, together with the localization terms coming from $M< i\leq N$ in \eqref{fl}, are indeed negative by \eqref{eq:aest}. Therefore these terms can be discarded in the functional integral. This concludes the discussion of the term $\mathcal A$ and leaves us with $\mathcal B+\mathcal C$ and the remaining localization terms from \eqref{fl} corresponding to $1\leq i\leq M$.

Since $\mathcal B$ and $\mathcal C$ refer to different, now non-interacting, sets of particles (namely, $M+1\leq k\leq N$ and $1\leq i\leq M$), we see that the functional integral factorizes. The term $\mathcal B$ is just the exponent in the path integral for $(N-M)$ non-interacting polarons, each with its own field. The integral of $e^\mathcal B$ contributes $(N-M) E^{(1)}(\alpha)$ to the energy. Henceforth we can forget about $\mathcal B$.

With the aid of our previous linear lower bound for $U>2\alpha$, term $\mathcal C$ is almost as simple as term $\mathcal B$. Since $U>2\alpha$ we can write $U=2\alpha + Z + V$ with $Z$ and $V$, both positive, to be chosen later. Because of the separation condition for any $1\leq i\leq M$ and any time $t$, the distance between $\omega_i(t)$ and its nearest neighbor among the $\omega_k(t)$'s is bounded above by $ \ell$, and hence 
\begin{equation}
\sum_{1\leq i<k\leq M} \frac V {|\omega_i(t)-\omega_k(t)|} \geq  \frac V{ \ell}  M \,.
\end{equation}
The integral of $e^\mathcal C$ contributes at least
\begin{equation}\label{cond2}
E_{2\alpha+Z}^{(M)}(\alpha) + M \frac V \ell
\end{equation}
to the energy. By Theorem \ref{thm:stabalpha} this is bounded from below by $-M C(2\alpha+Z,\alpha) + M V/\ell$, where $C(2\alpha+Z,\alpha)$ is the finite constant implicit in Theorem \ref{thm:stabalpha}. This term is at least $$
M E^{(1)}(\alpha) + M \left( \frac {\gamma\,b^2}{(\ell-\ell/b)^2}+ \frac {\pi^2}{L^2} \right)
$$ 
(the second term being the localization error from \eqref{fl}) provided we take
\begin{equation}
V = \ell \left( E^{(1)}(\alpha) + \frac {\gamma\,b^2}{(\ell-\ell/b)^2}+ \frac {\pi^2}{L^2} + C(2\alpha+Z,\alpha) \right) \,.
\end{equation}
Another way to state this is that $U$ must satisfy
\begin{equation}\label{U2}
\boxed{ \   U \geq 2\alpha + Z + \ell\left( \frac {\gamma\,b^2}{(\ell-\ell/b)^2}+ \frac {\pi^2} {L^2} \right)  + \ell \sup_{n\geq 2} \left| \frac{E^{(n)}_{2\alpha+Z}(\alpha)}{n}- E^{(1)}(\alpha)  \right| \ } 
\end{equation}
for some $Z>0$.

We have thus shown that, for any given $\bfj$ and $\bfu$, the
integrand in the first line of (\ref{fl}) is bounded from below by $N
E^{(1)}(\alpha)\|\psi_{\bfj,\bfu}\|^2$ as long as $U$ satisfies the
bounds (\ref{U1}) and (\ref{U2}) (for some $Z>0$). In combination with
(\ref{norm}), this concludes the proof of Theorem~\ref{thm:nobindingN}.
\qed

\bigskip
There are many parameters in (\ref{U1}) and (\ref{U2}): $b$, $\ell$, $L$ and $Z$. The only constraint on them is $b>\ell>4Lb^2$, and each choice gives rise to a computable estimate on the critical $U$. We emphasize that the resulting bound on $U$ is independent of $N$. 

Our bound on the critical value $U_c(\alpha)$ is proportional to
$\alpha$ for large $\alpha$. This follows with the choice $\ell \sim L
\sim \alpha^{-1}$ and $b= O(1)$, in which case condition (\ref{U1}) is
of the form $U \geq \const \alpha$.  With $Z\sim \alpha$, condition
(\ref{U2}) is also of this form for large $\alpha$, since the last
term is bounded by $\const \alpha^2$ for large $\alpha$, as shown in
the previous section. We conjecture that the last term in (\ref{U2})
is actually bounded by $\alpha^2$ for \emph{all} $\alpha$, as
explained at the end of the previous section, Eq.~\eqref{conj:alpha}.
Assuming the validity of~\eqref{conj:alpha}, our method leads to the
bound $U_c(\alpha)\leq \const \alpha$ for \emph{all} $\alpha>0$.


\section{The Pekar-Tomasevich functional}
\label{sec:pekar}


\subsection{Boltzons for {\mathversion{bold}$U>2\alpha$}}

We shall prove the analogue of Theorem \ref{thm:stabalpha} for the PT functional. The designation `boltzons' refers to particles without any symmetry restriction.

\begin{proposition}\label{pekarstab}
 If $U>2\alpha$ then
\begin{equation}
 \label{eq:pekarstab}
\mathcal E^{(N)}_U(\alpha) \geq  
\begin{cases}
 - (0.0280) N U^3/(U-2\alpha) & \quad\text{if}\ 2\alpha<U<3\alpha \,,\\
- (0.755) N \alpha^2 & \quad\text{if}\ U\geq 3\alpha \,.
\end{cases}
\end{equation}
\end{proposition}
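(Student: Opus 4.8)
The plan is to establish a pointwise lower bound on the Pekar--Tomasevich functional $\mathcal P^{(N)}_U[\psi]$ using the positive-definiteness idea already exploited in Section~3 for the operator problem, but now at the level of the classical density $\rho_\psi$. The key observation is that the attractive term $-\alpha\iint \rho_\psi(x)\rho_\psi(y)/|x-y|\,dx\,dy$ is a negative-definite quadratic form in $\rho_\psi$, so I would try to dominate it by a combination of the Coulomb repulsion $U\sum_{i<j}\int |\psi|^2/|x_i-x_j|$ and a controllable self-energy, analogous to the bound \eqref{eq:posdef}. Concretely, I would introduce the same regularization: pick a smooth even $\chi$ with $\int\chi^2=1$ and $\omega>0$ large enough that $f(x)=|x|^{-1}(1-e^{-\omega|x|}\chi*\chi(x))$ is positive definite (Lemma~2.1 of \cite{CLY}), and apply positive-definiteness with the density $\rho(x)=\sqrt{\alpha}\,\rho_\psi(x)$ in place of the operator-valued $\rho$. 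This converts the long-range attraction into (i) a pointwise repulsive screened interaction among the particles, (ii) a linear term coupling each particle to $\rho_\psi$, and (iii) a positive self-energy, all controlled by the parameter $\omega$.

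The main steps, in order, are as follows. First, I would write $-\alpha\iint \rho_\psi\rho_\psi/|x-y| = -\alpha\iint\rho_\psi f \rho_\psi - \alpha\iint \rho_\psi(x)\,e^{-\omega|x-y|}\chi*\chi(x-y)/|x-y|\,\rho_\psi(y)$ and use positive-definiteness of $f$ to discard the first piece after comparing it with the bare diagonal terms. Second, since $\rho_\psi(x)=\sum_i\int|\psi|^2\delta(x-x_i)$, expanding $\iint\rho_\psi \,(\text{screened kernel})\,\rho_\psi$ produces cross terms $\sum_{i\neq j}\langle\psi|\,g_\omega(x_i-x_j)\,|\psi\rangle$ with $g_\omega$ an integrable screened kernel bounded by $C/|x_i-x_j|$, which I would dominate by $U V_C$ provided $U>2\alpha$ up to an $\omega$-dependent constant. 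Third, the diagonal $i=i$ contributions give a local self-energy per particle of the form $-\const\,\alpha\omega\int\rho_\psi = -\const\,\alpha\omega N$ after using $\|\psi\|=1$; combined with the kinetic energy $\sum_i\int|\nabla_i\psi|^2\ge 0$ one obtains a lower bound linear in $N$. Finally, I would optimize over $\omega$ (equivalently over the scaling parameter), which by scaling enters as $\omega\sim\alpha$, to produce the explicit constants $0.0280\,NU^3/(U-2\alpha)$ and $0.755\,N\alpha^2$; the switch at $U=3\alpha$ comes from which term dominates the optimization.

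The hard part, I expect, will be extracting the \emph{sharp} numerical constants rather than the qualitative linear-in-$N$ bound. Getting a clean $-\const\,N\alpha^2$ is straightforward once the screened repulsion is dominated by $U V_C$, but tracking the precise value of the self-energy constant requires a careful choice of $\chi$ and $\omega$ and an honest evaluation of $\int g_\omega$ and of the diagonal kernel at the optimal scale; the threshold behavior near $U\searrow 2\alpha$, where the bound blows up like $(U-2\alpha)^{-1}$, is the delicate regime where the fraction of the Coulomb term left over after cancelling the $2\alpha$ attraction must be balanced against the self-energy. A secondary subtlety is ensuring that the pointwise cross-term estimate $g_\omega(x_i-x_j)\le C/|x_i-x_j|$ holds with the correct constant so that the coefficient of $V_C$ is exactly $U-2\alpha$ up to the controllable remainder; this is where the positive-definiteness of $f$ must be used most carefully, exactly as in the operator case of the previous section.
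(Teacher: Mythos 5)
Your proposal has a genuine gap, and it sits exactly at the point you gloss over in your second step. In the Pekar--Tomasevich functional the attraction is the \emph{Hartree (direct) term} $-\alpha\iint\rho_\psi(x)\rho_\psi(y)|x-y|^{-1}dx\,dy$, which is a product of one-particle expectations, while the repulsion $U\sum_{i<j}\int|\psi|^2|x_i-x_j|^{-1}dX$ is the expectation of a genuine two-body operator. These are \emph{not} interchangeable: writing $\rho_\psi$ as the expectation of $\sum_i\delta(x-x_i)$ and ``expanding'' $\iint\rho_\psi\,g_\omega\,\rho_\psi$ into cross terms $\sum_{i\neq j}\langle\psi|g_\omega(x_i-x_j)|\psi\rangle$ is false; the difference is precisely the exchange--correlation energy. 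To compare the two one needs an exchange estimate, and this is what the paper uses: the Lieb--Oxford inequality, which bounds $\sum_{i<j}\langle|x_i-x_j|^{-1}\rangle$ from below by $\tfrac12\iint\rho_\psi\rho_\psi|x-y|^{-1}-(1.68)\int\rho_\psi^{4/3}$. Writing $U=2(\alpha+\delta)$, half of the repulsion then cancels the attraction exactly, with $\delta\iint\rho_\psi\rho_\psi|x-y|^{-1}$ to spare, and the whole problem reduces to controlling the error term $(1.68)U\int\rho_\psi^{4/3}$.

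This exposes the second, fatal flaw in your plan: you propose to discard the kinetic energy ($\sum_i\int|\nabla_i\psi|^2\geq 0$), but without it the bound cannot close. The error $\int\rho_\psi^{4/3}$ (or any short-range remainder of an Onsager/sliding decomposition, such as your screened kernel $\alpha\iint\rho_\psi\,e^{-\omega|x-y|}\chi{*}\chi(x-y)|x-y|^{-1}\rho_\psi$) is not bounded by $\mathrm{const}\cdot N$ uniformly in $\psi$: concentrating $\rho_\psi$ makes it arbitrarily large while $\int\rho_\psi=N$. The paper's proof uses the Hoffmann-Ostenhof inequality to retain $\int|\nabla\sqrt{\rho_\psi}|^2$ from the kinetic term, then H\"older plus a Schwarz-type lemma to bound $\int\phi^{8/3}$ by the geometric mean of the kinetic term, the leftover Coulomb term, and $\int\phi^2$; a three-way arithmetic--geometric mean inequality then absorbs the first two factors and leaves the constant $(1.68)^3U^3/(54\pi\,\delta)$ per particle, which is where $0.0280\,U^3/(U-2\alpha)$ and (by monotonicity in $U$, evaluated at the optimal point $U=3\alpha$) the constant $0.755$ come from. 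Your positive-definiteness route, by contrast, would at best reproduce the Section~3 operator argument, which is not what is needed here and in any case cannot yield these particular constants without the Lieb--Oxford and Hoffmann--Ostenhof inputs.
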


We note that this proves \eqref{eq:stabpekar} with the constant stated in \eqref{eq:stabpekarconst}.

\begin{proof}
We write $U=2(\alpha+\delta)$ with some $\delta>0$. Given any $\psi$, and hence $\rho_\psi$, we will use two inequalities to bound the first two terms in the functional \eqref{eq:pekar} in terms of $\rho_\psi$. The first is the Hoffmann-Ostenhof inequality \cite{HoHo} (see also \cite[Cor. 8.4]{LiSe}),
\begin{equation}
 \label{eq:lt2}
\sum_{i=1}^N \int_{\R^{3N}} |\nabla_i \psi(X)|^2 \,dX \geq \int_{\R^3} |\nabla \sqrt{\rho_\psi}(x)|^{2} \,dx \,.
\end{equation}
The second is the Lieb-Oxford inequality \cite{LiOx}, \cite[Thm. 6.1]{LiSe}
\begin{equation}
 \label{eq:lo}
\sum_{i<j} \int_{\R^{3N}} \frac{|\psi(X)|^2}{|x_i-x_j|} \,dX 
\geq \frac{1}{2} \iint_{\R^3\times\R^3} \frac{\rho_\psi(x)\,\rho_\psi(y)}{|x-y|} \,dx\,dy - (1.68) \int_{\R^3} \rho_\psi(x)^{4/3} \,dx \,.
\end{equation}
These two bounds imply that (with $\phi:=\sqrt{\rho_\psi/N}$)
\begin{equation}
 \label{eq:pekarstabproof}
\frac 1 N \mathcal P^{(N)}_{U}[\psi] \geq \int_{\R^3}\! \left( |\nabla \phi|^{2} - (1.68)U N^{1/3} \phi^{8/3}\right) dx
+ \delta N \iint_{\R^3\times\R^3} \frac{\phi(x)^2\,\phi(y)^2}{|x-y|} \,dx\,dy \,.
\end{equation}
Next, we use H\"older's inequality
\begin{equation}
\int \phi^{8/3} \,dx \leq \left( \int_{\R^3} \phi^3 \,dx \right)^{2/3} \left( \int_{\R^3} \phi^2 \,dx \right)^{1/3}
\end{equation}
and Lemma \ref{schwarz} below to conclude that
\begin{equation}
\int \phi^{8/3} \,dx \leq \frac 1{(4\pi)^{1/3}} \left( \iint \frac{\phi(x)^2\,\phi(y)^2}{|x-y|} \,dx\,dy \right)^{1/3} 
\left(\int |\nabla \phi|^{2} \,dx \right)^{1/3} \left( \int \phi^2 \,dx \right)^{1/3} .
\end{equation}
Using $(\alpha\beta\gamma)^{1/3} abc \leq \frac13 (\alpha a^3 +\beta b^3 +\gamma c^3)$ for non-negative numbers $a,b,c,\alpha,\beta,\gamma$, we see that
\begin{align}\nonumber 
 (1.68)U N^{1/3} \int \phi^{8/3} \,dx 
\leq & \int |\nabla\phi|^2 \,dx + \delta N \iint \frac{\phi(x)^2\,\phi(y)^2}{|x-y|} \,dx\,dy \\
& +\frac{(1.68 U)^3}{4\pi 3^3 \delta} \int \phi^2 \,dx \,.
\end{align}
This, together with \eqref{eq:pekarstabproof} leads to the lower bound
\begin{equation}
\mathcal E^{(N)}_U(\alpha) \geq -\frac{(1.68)^3}{54\pi} \frac{U^3}{U-2\alpha} N \,.
\end{equation}
While this is true for all $U>2\alpha$, the right side is not a monotone increasing function of $U$, which we know the left side to be. Therefore we can say that $\mathcal E^{(N)}_U(\alpha)$ is bounded from below by the maximum value of the right side once $U$ exceeds the maximum point, which is $U=3\alpha$. This concludes the proof.
\end{proof}

\begin{Lemma}\label{schwarz}
 For non-negative functions $\phi$
\begin{equation}
\left( \int_{\R^3} \phi^3 \,dx \right)^2 \leq \frac 1{4\pi} \iint_{\R^3\times\R^3} \frac{\phi(x)^2\,\phi(y)^2}{|x-y|} \,dx\,dy\  \int_{\R^3} |\nabla\phi(x)|^2 dx \,.
\end{equation}
\end{Lemma}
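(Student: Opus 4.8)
The plan is to recognize the right-hand side in terms of the Newtonian potential of the density $\phi^2$ and then to reduce the whole statement to a single application of the Cauchy--Schwarz inequality. Set
\[
V(x) := \frac{1}{4\pi}\int_{\R^3} \frac{\phi(y)^2}{|x-y|}\,dy \,,
\]
so that $-\Delta V = \phi^2$. The first step is to observe that the Coulomb energy appearing on the right equals $4\pi$ times the Dirichlet energy of $V$. Indeed, since $\int_{\R^3}\phi(y)^2|x-y|^{-1}\,dy = 4\pi V(x)$, one integration by parts gives
\[
\iint_{\R^3\times\R^3} \frac{\phi(x)^2\,\phi(y)^2}{|x-y|}\,dx\,dy = 4\pi \int_{\R^3} \phi^2\, V\,dx = 4\pi\int_{\R^3} (-\Delta V)\,V\,dx = 4\pi \int_{\R^3} |\nabla V|^2\,dx \,.
\]

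The second step rewrites the left-hand side in the same language. Using $\phi^2 = -\Delta V$ and integrating by parts once more,
\[
\int_{\R^3} \phi^3\,dx = \int_{\R^3} \phi\,(-\Delta V)\,dx = \int_{\R^3} \nabla\phi\cdot\nabla V\,dx \,.
\]
Now I would apply Cauchy--Schwarz to this last integral,
\[
\int_{\R^3} \nabla\phi\cdot\nabla V\,dx \leq \left(\int_{\R^3} |\nabla\phi|^2\,dx\right)^{1/2}\left(\int_{\R^3} |\nabla V|^2\,dx\right)^{1/2} \,,
\]
square both sides, and substitute $\int_{\R^3}|\nabla V|^2\,dx = (4\pi)^{-1}\iint \phi(x)^2\phi(y)^2|x-y|^{-1}\,dx\,dy$ from the first step. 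This produces exactly the asserted inequality, with the factor $1/(4\pi)$ appearing automatically.

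The only genuine work is the integrability bookkeeping needed to justify the two integrations by parts and the vanishing of the boundary terms. I would first establish the estimate for nonnegative $\phi\in\Con(\R^3)$, where $V$ is smooth and behaves like the potential of a compactly supported mass, $V(x)=O(|x|^{-1})$ and $\nabla V(x)=O(|x|^{-2})$, so that every integral above converges and all boundary contributions vanish; the general case then follows by a standard density and monotone-convergence argument, since both sides are finite and continuous under the relevant approximation. If either factor on the right is infinite the inequality is trivial, so one may assume $\nabla\phi\in L^2$ and the Coulomb integral finite from the outset; this is the mild obstacle, but it is routine rather than essential.
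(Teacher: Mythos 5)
Your proof is correct and is essentially the paper's proof: the paper writes $\int\phi^3 = \langle |p|^{-1}\phi^2,\,|p|\phi\rangle$ and applies the Schwarz inequality, which is exactly your Cauchy--Schwarz on $\int\nabla\phi\cdot\nabla V\,dx$ with $V=(-\Delta)^{-1}\phi^2$, merely phrased in position space via the Newtonian potential instead of the operator $|p|$. The splitting, the single use of Cauchy--Schwarz, and the origin of the factor $1/(4\pi)$ (the kernel of $|p|^{-2}$) are identical.
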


\begin{proof}
 We apply Schwarz's inequality
\begin{equation}
\left( \int_{\R^3} \phi^3 \,dx \right)^2  = \left\langle |p|^{-1} \phi^2 \left|\right. |p| \phi \right\rangle^2
\leq \left\langle \phi^2 \left| \frac{1}{p^2} \right| \phi^2 \right\rangle 
\left\langle \phi \Big| p^2 \Big| \phi \right\rangle
\end{equation}
and recall that $|p|^{-2}$ is convolution with $(4\pi|x|)^{-1}$.
\end{proof}

The stability of the PT functional with critical repulsion $U=2\alpha$ remains an \emph{open problem}. In the fermionic case the answer is affirmative, as was shown by Griesemer and M\o ller \cite{GrMo}. For the reader's convenience we include the proof here.

Combining the Lieb-Thirring inequality \cite{LiThi} \cite[Cor. 4.1]{LiSe}
\begin{equation}
 \label{eq:lt}
\sum_{i=1}^N \int_{\R^{3N}} |\nabla_i \psi(X)|^2 \,dX \geq K q^{-2/3} \int_{\R^3} \rho_\psi(x)^{5/3} \,dx \,.
\end{equation}
where $K:=\frac95 (2\pi)^{2/3}=6.13$ \cite{DoLaLo}, with the Lieb-Oxford inequality \eqref{eq:lo}, one deduces
\begin{equation}
\mathcal P^{(N)}_{2\alpha}[\psi] \geq K q^{-2/3} \int_{\R^3} \rho_\psi(x)^{5/3} \,dx - 2\alpha (1.68) \int_{\R^3} \rho_\psi(x)^{4/3} \,dx \,.
\end{equation}
The minimization of the expression on the right side under the normalization constraint $\int_{\R^3} \rho_\psi(x) \,dx=N$ leads to the lower bound
\begin{equation}
\mathcal P^{(N)}_{2\alpha}[\psi] \geq - \frac{(1.68)^2}{K} q^{2/3} N \alpha^2 \,,
\end{equation}
as claimed in Eq.~\eqref{eq:thm4}.


\subsection{Absence of binding in the Pekar-Tomasevich model}

We have just given an alternative proof of the fact that there is stability of matter in the PT model when $U>2\alpha$. Now we discuss the other part of Corollary \ref{pekarcor}, that is, the absence of binding for large $U$.

The first step is to linearize the problem. The variational problem for the PT functional can, equivalently, be written
\begin{equation}\label{eq:lin}
\mathcal E^{(N)}_U(\alpha) = \inf_{\psi,\sigma}\left\{  \left\langle \psi \left| \mathcal H^{(N)}_{U,\sigma} \right| \psi\right\rangle :\  \int_{\R^{3N}} |\psi|^2\,dX = 1 \right\} 
\end{equation}
where the $N$-particle Hamiltonian $\mathcal H^{(N)}_{U,\sigma} $ is defined to be
\begin{align}\nonumber
\mathcal H^{(N)}_{U,\sigma} &:= \sum_{i=1}^N  \left( -\Delta_i - 2\alpha \int_{\R^3} \frac{\sigma(y)}{|x_i-y|} \, dy \right) + U \sum_{i<j} \frac 1{|x_i-x_j|}   \\ & \quad  + \alpha \iint_{\R^3\times\R^3} \!\!\!\frac{\sigma(x)\, \sigma(y)}{|x-y|} \,dx\,dy 
\,. \label{eq:linham}
\end{align}
We proceed as before, by localizing particles into individual boxes,
with sizes depending on the distance to the nearest neighbor. In each
localization region we obtain a lower bound on the energy of a given
$\psi$ by minimizing over $\sigma$, which yields the PT functional for the localized $\psi$. (In other words, we linearize, localize, and de-linearize. If we had not followed this route and tried to deal with the quartic term directly, the resulting expressions would be much more complicated.) 

Consider first the case $N=2$. In the region $j=0$, we need a lower bound on $\mathcal E_0^{(2)}(\alpha)$. We could use Lemma~\ref{lem2} above, but it is simpler, and indeed more accurate, to use
\begin{equation}
\mathcal E_0^{(2)}(\alpha) =  2\, \mathcal E_0^{(1)}(2\alpha) = 8\, \mathcal E_0^{(1)}(\alpha) = 2\, \mathcal E_0^{(1)}(\alpha) - 6 \cdot (0.109) \, \alpha^2\,.
\end{equation}
The first equality follows from the linearization \eqref{eq:lin} since the ground state of \eqref{eq:linham} for $U=0$ is a product function for every $\sigma$; the second equality follows by scaling.

For the regions $j\geq 1$, we obtain the PT functional for 2 particles
localized in disjoint balls. The proof of the corresponding lower
bound to the energy, analogous to Lemma~\ref{lem3}, is obvious,
bounding the attractive energy using the smallest possible distance of
the particles. Alternatively, one can use Lemma~\ref{lem4}. For $j\geq
1$ the resulting condition on $U$ is thus the same as in the proof of
Theorem~\ref{thm2}. Our improved estimate in the $j=0$ region leads to
the bound $\nu_c \leq 29.4$ (to be compared with the bound $C<2\cdot (26.6)=53.2$ for the Fr\"ohlich polaron).

We can similarly analyze the $N$-particle problem. For the particles
with $j\geq 1$ the bounds are exactly the same as before, except that
functional integrals are not needed in the derivation.  For the
particles with $j=0$ the improved stability bound (\ref{eq:pekarstab})
(or (\ref{eq:thm4}) for fermions) is used, and hence the final
condition for the absence of binding will be a lower value of $U$ than that
for the Fr\"ohlich Hamiltonian.\hfill\qed


\appendix
\section{Proof of Lemma~\ref{lem4}}

We assume that the confining balls $B_1$, $B_2$ are
each of radius $R$, and that the balls are of distance $d=
\inf\{|x_1-x_2|: x_1\in B_1,\,\,x_2\in B_2\}$ from each other.  In the
following, let
\begin{equation}
      \hat{a}(x)= \frac{1}{(2\pi)^{3/2}}\int dk\, e^{ikx}a(k) \, ,\  \hat{a}^{\dagger}(x)= \frac{1}{(2\pi)^{3/2}}\int dk\, e^{-ikx}a^{\dagger}(k)
\end{equation}
be normalized annihilation and creation operators, with $x$ 
a point in configuration space.  In terms of these operators, the particle-
field interaction term in $H^{(2)}_U$ is given by
\begin{equation}
  \sqrt{\alpha}\phi(x) = \frac{\sqrt{\alpha}}{\pi^{3/2}}\int dy\, \frac{\hat{a}^{\dagger}(y)+\hat{a}(y)}{|x-y|^2}
\end{equation}
and the phonon field energy by
   \begin{equation}
        H_f= \int dx\,\hat{a}^{\dagger}(x)\hat{a}(x).
\end{equation}

Fix a plane midway between the two balls and perpendicular
to the line between their centers,  and
let $S_1$ and $S_2$ be the resulting half-spaces with $B_1\subset S_1$ and
$B_2\subset S_2$. Then we have 
the identity
 \begin{align}\nonumber 
H^{(2)}_{U=0}  = & \ p^2_1-\frac{\sqrt{\alpha}}{\pi^{3/2}}\int_{S_1}dy\, 
\frac{\hat{a}^{\dagger}(y)+\hat{a}(y)}{|x_1-y|^2}\\
& +
\int_{S_1}dy \left(\hat{a}^{\dagger}(y)-\frac{\sqrt{\alpha}}{\pi^{3/2}|x_2-y|^2}\right)
\left(\hat{a}(y)-\frac{\sqrt{\alpha}}{\pi^{3/2}|x_2-y|^2}\right)\nonumber\\
&+ p^2_2-\frac{\sqrt{\alpha}}{\pi^{3/2}}\int_{S_2}dy \,\frac{\hat{a}^{\dagger}(y)+\hat{a}(y)}{|x_2-y|^2}\nonumber\\
& +\int_{S_2}dy \left(\hat{a}^{\dagger}(y)-\frac{\sqrt{\alpha}}{\pi^{3/2}|x_1-y|^2}\right)
\left(\hat{a}(y)-\frac{\sqrt{\alpha}}{\pi^{3/2}|x_1-y|^2}\right)\nonumber\\
& -\frac{\alpha}{\pi^3}\int_{S_1} \frac{dy}{|x_2-y|^4}-\frac{\alpha}{\pi^3}\int_{S_2}\frac{dy}{|x_1-y|^4}\, . \label{H3.eq}
\end{align}

Define 
   \begin{equation}
   \hat{a}_{x_{2}}(y) \equiv \left(\hat{a}(y)-\frac{\sqrt{\alpha}}{\pi^{3/2}|x_2-y|^2}\right),\,\,y \in S_1
\end{equation}
and analogous expressions for  $\hat{a}^{\dagger}_{x_{2}}(y)$,  and
for $\hat{a}_{x_{1}}(y), \hat{a}^{\dagger}_{x_{1}}(y)$ with $y\in S_2$. In terms of these operators, the identity (\ref{H3.eq})
becomes 
 \begin{align} \nonumber 
    H^{(2)}_{U=0}= & \ p^2_1-\frac{\sqrt{\alpha}}{\pi^{3/2}}\int_{S_1}dy 
    \,\frac{\hat{a}^{\dagger}_{x_{2}}(y)+\hat{a}_{x_{2}}(y)}{|x_1-y|^2}+\int_{S_1}dy\, \hat{a}^{\dagger}_{x_{2}}(y)\hat{a}_{x_{2}}(y) \nonumber\\
& + p^2_2-\frac{\sqrt{\alpha}}{\pi^{3/2}}\int_{S_2}dy\, \frac{\hat{a}^{\dagger}_{x_{1}}(y)+\hat{a}_{x_1}(y))}{|x_2-y|^2} +
\int_{S_2}dy\, \hat{a}^{\dagger}_{x_{1}}(y)\hat{a}_{x_{1}}(y)\nonumber\\
& -\frac{\alpha}{\pi^2|x_1\cdot n|}-\frac{\alpha}{\pi^2|x_2\cdot n|}-\frac{2\alpha}{|x_1-x_2|}\,, \label{app1.eq}
\end{align}
where $|x_i\cdot n|$ is the distance between $x_i$ and the dividing plane, $i=1,2$.
(The integrals in the last line of Eq.~(\ref{H3.eq}) can be done explicitly via
cylindrical coordinates, resulting in the first and second terms in
the last line of this identity; the last term is an integral 
of $|x_1-y|^{-2}|x_2-y|^{-2}$ over {\it all} of $\R^3$ and is readily computed
to be the Coulomb attraction term.)

We can give a lower bound on expectations of the right side of the first line of this last equation
(\ref{app1.eq}), assuming the first particle is indeed confined in
the ball $B_1$. Let $K_{S_{1},x_{2}}$ be the one-particle operator of the first line,
  \begin{equation}
    K_{S_{1},x_{2}}= p^2_1-\frac{\sqrt{\alpha}}{\pi^{3/2}}\int_{S_1}dy\, \frac{\hat{a}^{\dagger}_{x_{2}}(y)+\hat{a}_{x_{2}}(y)}{|x_1-y|^2}
     +\int_{S_1}dy \,\hat{a}^{\dagger}_{x_{2}}(y)\hat{a}_{x_{2}}(y) \,,
 \end{equation}
 which we regard as acting in the Hilbert space $L^{2}(S_{1})\otimes {\mathcal F}_{S_{1}}$, the
 latter factor being the Fock space associated with the phonon variables $y\in S_{1}$; the operator is 
 a function of $x_{2}\in S_{2}$.  Note that $p_{1}$ commutes with  $\hat{a}_{x_{2}}(y)$ and its adjoint 
 and that $\hat{a}_{x_{2}}(y)$ and  $\hat{a}_{x_{2}}^{\dagger}(y)$ satisfy the canonical commutation relations. The
 operator function $K_{S_{2},x_{1}}$ is defined analogously.
  
 Fix $x_{2}$, let $\psi$ be a state in $L^{2}(\R^3)\otimes {\mathcal F}_{S_{1}}$
supported in $x_1\in B_1$, 
and then consider a product state $\Psi= \psi\otimes\Phi\in L^2(\R^3)\otimes {\mathcal F}$ where $\Phi$ is
a coherent state of the phonon variables corresponding to $y\in S_2$ such
that 
\begin{equation}
        \hat{a}(y)|\Phi\rangle = \frac{\sqrt{\alpha}}{\pi^{3/2} |x_c-y|^2}\,|\Phi\rangle\, ,\ y\in S_2. 
\end{equation}
Here, we take $x_c\in S_{1}$ to be on the line passing through the centers of
the two balls and of distance $d/2+2R$ from the dividing plane (i.e.
as remote from the dividing plane as possible but on the surface of $B_1$).
 For
such a state $\Psi$, we  have that
\begin{align} \nonumber 
  E^{(1)}(\alpha) & \leq \langle \Psi | H^{(1)} |\Psi\rangle \\
  &= \langle \psi | K_{S_{1},x_{2}} |\psi\rangle -\frac{2\alpha}{\pi^3}
\int_{S_2} dy\,\Big\langle \psi\Big|\frac{1}{|x_1-y|^2|x_c-y|^2}\Big| \psi\Big\rangle  + \frac{\alpha}{\pi^3}\int_{S_2} \frac{dy}{|x_c-y|^4}\nonumber\\
    &\leq  \langle \psi | K_{S_{1},x_{2}}| \psi\rangle -\frac{2\alpha}{\pi^3}
 \inf_{x_1\in B_1}\int \frac{dy}{|x_1-y|^2|x_c-y|^2} +\frac{\alpha}{\pi^3(d/2+2R)}\,. \label{A2.eq}
\end{align}
The integral in the infimum is seen to have no critical points for
$x_1$ in the interior of $B_1$ and so attains its minimum for $x_1$ on the
boundary of $B_1$. The integral can again be written using cylindrical coordinates and
the angular integration performed explicitly. One then writes the integrand of the resulting double integral just as a function
of $z_1$, say, where $x_1= (r_1,\theta_1,z_1)$ in cylindrical coordinates, and where $(z_1+d/2+R)^2+r_1^2= R^2$.
Minimization of the integrand in this double integral regarded as a function of $z_1$ is
tedious but straightforward, the minimum occurring at $x_1= x_c$. The
integral $\int_{S_2} dy\, |x_c-y|^{-4}$ is equal to $\pi/(d/2+2R)$ as computed above for Eq.~(\ref{H3.eq}). Thus, we obtain
\begin{equation}\label{A3.eq}
 E^{(1)}(\alpha) \leq \langle \psi_1 | K_{S_{1},x_{2}} |\psi_1\rangle -\frac{\alpha}{\pi^2(d/2+2R)}.
\end{equation}
Of course the second line of Eq.~(\ref{app1.eq}) is handled similarly.

By this last inequality (\ref{A3.eq}) and Eq.~(\ref{app1.eq}), we have
that for any state $\Psi$ with electron support in $B_1\times B_2$, 
\begin{align}\nonumber 
  \langle\Psi | H^{(2)}_{U=0}| \Psi\rangle &= \langle \Psi| K_{S_{1},x_{2}}\otimes 1 | \Psi\rangle 
+\langle \Psi | K_{S_{2},x_{1}}\otimes 1 | \Psi\rangle  \\
 & \quad   -\alpha \Big{\langle}\Psi \Big| \left(\frac{1}{\pi^{2}|x_1\cdot n|}+ \frac{1}{\pi^{2} |x_2\cdot n|}
+\frac{2}{|x_{1}-x_{2}|}\right) \Big| \Psi\Big\rangle \nonumber\\
&\geq  2E^{(1)}(\alpha) +\frac{2\alpha}{\pi^{2}(d/2 +2R)} \nonumber\\
  & \quad   -\alpha \Big\langle\Psi \Big| \left(\frac{1}{\pi^{2}|x_1\cdot n|}+ \frac{1}{\pi^{2} |x_2\cdot n|}
+\frac{2}{|x_{1}-x_{2}|}\right) \Big| \Psi\Big\rangle \,.
\end{align} 
Noting that $|x_{1}\cdot n|$ and $|x_{2}\cdot n|$ are at least $d/2$, we
have that
\begin{equation}     
\langle\Psi | H^{(2)}_{U=0}| \Psi\rangle \geq  2E^{(1)}(\alpha) -2\alpha\Big\langle \Psi \Big| \frac{1}{|x_{1}-x_{2}|}\Big| \Psi\Big\rangle -\frac{16\alpha R}{\pi^{2}d
(d+4R)}\,,
\end{equation}
which is the claim of Lemma~\ref{lem4}.  \hfill \qed

\bigskip {\it Acknowledgments.} We are grateful to Herbert Spohn for making us aware of the problem of proving absence of binding
for large $U$. We also thank Marcel Griesemer and Jacob Schach M\o ller for helpful comments on an early version of our manuscript.
 Partial financial support from the U.S.~National
Science Foundation through grants PHY-0652854 (E.L. and R.F.) and
PHY-0845292 (R.S.) are gratefully acknowledged. L.T. would like to
thank the PIMS Institute, University of British Columbia, for their 
hospitality and support.



\begin{thebibliography}{20}

\bibitem{AlDe} A. S. Alexandrov, J. T. Devreese, {\it Advances in polaron physics}. Springer (2010).

\bibitem{BrKlDe} F. Brosens, S. N. Klimin, J. T. Devreese, \textit{Variational path-integral treatment of a translation invariant many-polaron system}. Phys. Rev. B \textbf{71} (2005), 214301--214313.

\bibitem{CLY} J. G. Conlon, E. H. Lieb, H.-T. Yau, {\it The $N^{7/5}$ law for charged bosons}. Comm. Math. Phys. {\bf 116} (1988), 417--448 .

\bibitem{IMS} H. L. Cycon, R. G. Froese, W. Kirsch, B. Simon, {\it Schr\"odinger operators}. Springer Texts and Monographs in Physics (1987).

\bibitem{DoLaLo} J. Dolbeault, A. Laptev, M. Loss, \textit{Lieb-Thirring inequalities with improved constants}. J. Eur. Math. Soc. (JEMS) \textbf{10} (2008), no. 4, 1121--1126.

\bibitem{DoVa} M. Donsker, S. R. S. Varadhan, \textit{Asymptotics for the polaron}. Comm. Pure Appl. Math. \textbf{36} (1983), 505--528.

\bibitem{Feynman55} R. P. Feynman, {\it Slow electrons in a polar crystal}. Phys. Rev. {\bf 97}, 660--665 (1955).

\bibitem{FLST} R. L. Frank, E. H. Lieb, R. Seiringer, L. E. Thomas, \textit{Bi-polaron and $N$-polaron binding energies}. Phys. Rev. Lett., to appear. arXiv:1004.1196

\bibitem{Fr} H. Fr\"ohlich, \textit{Theory of electrical breakdown in ionic crystals}. Proc. R. Soc. Lond. A \textbf{160} (1937), 230--241.

\bibitem{Fr2} J. Fr\"ohlich, \textit{Existence of dressed one-electron states in a class of persistent models}. Fortschr. Phys. \textbf{22} (1974), 159--198.

\bibitem{GeLo1} B. Gerlach, H. L\"owen, \textit{Absence of phonon-induced localization for the free optical polaron and the corresponding Wannier exciton-phonon system}. Phys. Rev. B \textbf{37} (1988), 8042--8047.

\bibitem{GeLo} B. Gerlach, H. L\"owen, \textit{Analytical properties of polaron systems or: Do polaronic phase transitions exist or not?} Rev. Mod. Phys. \textbf{63} (1991), 63--90.

\bibitem{GrMo} M. Griesemer, J. S. M\o ller, \textit{Bounds on the minimal energy of translation invariant $N$-polaron systems}. Comm. Math. Phys., to appear. Preprint (2009), arXiv:0907.0752v1.

\bibitem{Gu} M. Gurari, \textit{Self-energy of slow electrons in polar materials}. Phil. Mag. Ser. 7 \textbf{44}:350 (1953), 329--336.

\bibitem{HoHo} M. Hoffmann-Ostenhof, T. Hoffmann-Ostenhof, \textit{Schr\"odinger inequalities and asymptotics behavior of the electron density of atoms and molecules}. Phys. Rev. A \textbf{16} (1977), 1782--1785. 

\bibitem{LePi} T-D. Lee, D. Pines, \textit{The motion of slow electrons in polar crystals}. Phys. Rev. \textbf{88} (1952), 960--961.

\bibitem{LeLoPi} T-D. Lee, F. Low, D. Pines, \textit{The motion of slow electrons in a polar crystal}. Phys. Rev. \textbf{90} (1953), 297--302.

\bibitem{Li} E. H. Lieb, \textit{Existence and uniqueness of the minimizing solution of Choquard's nonlinear equation}.
Studies in Appl. Math. \textbf{57} (1976/77), no. 2, 93--105. 

\bibitem{LiOx}  E. H. Lieb, S. Oxford, {\it  An improved lower bound on the indirect Coulomb energy}. Int. J. Quant. Chem. {\bf 19} (1981), 427--439.

\bibitem{LiSe} E. H. Lieb, R. Seiringer, \textit{The stability of matter in quantum mechanics}. Cambridge (2010).

\bibitem{LS1} E. H. Lieb, J.P. Solovej, {\it Ground state energy of the one-component charged bose gas}. Comm. Math. Phys. {\bf 217} (2001), 127--163. Erratum: ibid. \textbf{225} (2002), 219--221.

\bibitem{LiThi}  E. H. Lieb, W. Thirring, {\it Bound for the kinetic energy of fermions which proves the stability of matter}. Phys. Rev. Lett. {\bf 35} (1975), 687--689. Erratum: ibid. {\bf 35} (1975), 1116.

\bibitem{LiTh} E. H. Lieb, L. E. Thomas, \textit{Exact ground state energy of the strong-coupling polaron}. Comm. Math. Phys. \textbf{183}, no. 3, 511--519 (1997). Erratum: \textit{ibid.} \textbf{188},  no. 2, 499--500 (1997).

\bibitem{LiYa} E. H. Lieb, K. Yamazaki, \textit{Ground-state energy and effective mass of the polaron}. Phys. Rev. \textbf{111}, 728--722 (1958).

\bibitem{Mi} S. J. Miyake, \textit{Strong coupling limit of the polaron ground state}. J. Phys. Soc. Jpn. \textbf{38} (1975), 181--182.

\bibitem{spohn} T. Miyao, H. Spohn, {\it The bipolaron in the strong coupling limit}. Ann. Henri Poincar{\'e} {\bf 8} (2007), 1333--1370.

\bibitem{Mo} J. S. M\o ller, \textit{The polaron revisited}. Rev. Math. Phys. \textbf{18} (2006), no. 5, 485--517.

\bibitem{Pe} S. I. Pekar, \textit{Research in electron theory of crystals}. United States Atomic Energy
Commission, Washington, DC, 1963.

\bibitem{PeTo} S. I. Pekar, O. F. Tomasevich, \textit{Theory of F centers}. Zh. Eksp. Teor. Fys. \textbf{21} (1951), 1218--1222.

\bibitem{Ro} G. Roepstorff, \textit{Path integral approach to quantum physics}. Berlin-Heidelberg-New York, Springer, 1994.

\bibitem{fomin} M. A. Smondyrev, V.M. Fomin, {\it Pekar-Fr\"ohlich bipolarons}. In: {\it Polarons and applications}, Proceedings in Nonlinear Science, V.D. Lakhno, ed., Wiley (1994).

\bibitem{Sp} H. Spohn, \textit{The polaron functional integral}. In: Stochastic processes and their applications, Dordrecht-Boston-London, Kluwer, 1990.

\bibitem{VSPD} G. Verbist, M. A. Smondyrev, F. M. Peeters, J. T. Devreese,
\textit{Strong-coupling analysis of large bipolarons in two and three dimensions}.
Phys. Rev. B {\bf 45}, 5262--5269 (1992).


\end{thebibliography}
\end{document}